\documentclass[11pt]{article}
\usepackage{graphicx}
\usepackage{verbatim}
\usepackage{fullpage}
\usepackage{hyperref}
\usepackage{amssymb}
\usepackage{amsmath}
\usepackage{amsthm}
\addtolength{\textheight}{0.5in}
\newtheorem{theorem}{Theorem}
\newtheorem{prop}[theorem]{Proposition}
\newtheorem{lemma}[theorem]{Lemma}

\def\reals{\mathbb{R}}
\def\R{\reals}
\def\sph{\mathbb{S}}
\def\suchthat{\;:\;}

\def\eps{\epsilon}
\def\diam{\mathrm{D}}

\newcommand{\ball}[2]{#2\mathbb{B}_{#1}}
\newcommand{\abs}[1]{\left|#1\right|}
\newcommand{\norm}[1]{\left\|#1\right\|}
\newcommand{\sqnorm}[1]{\left\|#1\right\|^{2}}

\newcommand{\prob}[1]{{\sf Pr}\left(#1\right)}

\newcommand{\vol}[1]{\operatorname{vol}\left(#1\right)}

\title{Sampling Harmonic Concave Functions: The Limit of Convexity Based Isoperimetry}
\author{Karthekeyan Chandrasekaran \thanks{School of Computer Science, Georgia Tech. Email: {\tt karthe@gatech.edu, vempala@cc.gatech.edu}}
\and
Amit Deshpande\thanks{Microsoft Research. Email: {\tt amitdesh@microsoft.com}}
\and
Santosh Vempala\footnotemark[1]
}
\date{}

\begin{document}
\maketitle

\begin{abstract}
Logconcave functions represent the current frontier of efficient algorithms for sampling, optimization and integration in $\reals^n$ \cite{LV06}.
Efficient sampling algorithms to sample according to a probability density (to which the other two problems can be reduced) relies on good isoperimetry which is known to hold for arbitrary logconcave densities. In this paper, we extend this frontier in two ways: first, we characterize convexity-like conditions that imply good isoperimetry, i.e., what condition on function values along every line guarantees good isoperimetry? The answer turns out to be the set of $(1/(n-1))$\emph{-harmonic concave functions} in $\reals^n$; we also prove that this is the best possible characterization along every line, of functions having good isoperimetry. Next, we give the first efficient algorithm for sampling according to such functions with complexity depending on a smoothness parameter. Further, noting that the multivariate Cauchy density is an important distribution in this class, we exploit certain properties of the Cauchy density to give an efficient sampling algorithm based on random walks with a mixing time that matches the current best bounds known for sampling logconcave functions.
\end{abstract}


\section{Introduction}

Over the past two decades, logconcave functions have emerged as the common frontier for the complexity of sampling, optimization and integration. More precisely, given a function $f:\R^n \rightarrow \R_+$, accessible by querying the function value at any point $x \in \R^n$, and an error parameter $\eps > 0$, three fundamental problems are: (i) Integration: estimate $\int f$ to within $1\pm \eps$, (ii) Maximization: find $x$ that approximately maximizes $f$, i.e., $f(x) \ge (1-\eps)\max f$, and (iii) Sampling: generate $x$ from density $\pi$ with $d_{tv}(\pi,\pi_f) \le \eps$ where $d_{tv}$ is the total variation distance and $\pi_f$ is the density proportional to $f$.
(For each of these, exact solutions are intractable.)

The complexity of an algorithm is measured by the number of queries for the function values and the number of arithmetic operations. The most general class of functions for which these problems are known to have polynomial complexity in the dimension $n$, is the class of logconcave functions. A function $f:\R^n \rightarrow \R_+$ is logconcave if its logarithm is concave on its support, i.e., for any two points $x,y \in \R^n$ and any $\lambda \in (0,1)$,
\begin{equation}\label{logconcave}
f(\lambda x + (1-\lambda) y) \ge f(x)^\lambda f(y)^{1-\lambda}.
\end{equation}
This powerful class generalizes indicator functions of convex bodies (and hence the problems subsume convex optimization and volume computation) as well as Gaussians.
Following the celebrated result of Dyer, Frieze and Kannan \cite{DFK} giving a polynomial algorithm for estimating the volume of a convex body, a long line of work \cite{AK,L90,DF90,LS92,LS93,KLS97,LV1,LV2,LV3} culminated in the results that both sampling and integration have polynomial complexity for any logconcave density. Integration is done by a reduction to sampling and sampling also provides an alternative to the Ellipsoid method for optimization \cite{BV,KV,LV06}. Sampling itself is achieved by a random walk whose stationary distribution has density proportional to the given function. The key question is thus the rate of convergence of the walk, which depends (among other things) on the isoperimetry of the target function. Roughly speaking, isoperimetry is the minimum ratio of the measure of the boundary of a partition of space into two sets to the measure of the smaller of the two sets. Logconcave functions satisfy the following isoperimetric inequality:
\begin{theorem} \cite{DF90,LS93} \label{thm:lcIsoperimetry}
Let $\pi_f$ be a distribution in $\R^n$ with density proportional to a logconcave function $f$. Let $K$ be the support of $f$, $D$ the diameter of $K$ and $S_1,S_2,S_3$ be any partition of $K$. Then,
\[
\pi_f(S_3) \ge \frac{2d(S_1,S_2)}{D}\min \{ \pi_f(S_1), \pi_f(S_2)\}.
\]
where $d(S_1,S_2)$ refers to the minimum distance between any two points in $S_1$ and $S_2$.
\end{theorem}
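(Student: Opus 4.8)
The plan is to prove the isoperimetric inequality for logconcave densities by the standard localization-type argument, reducing the $n$-dimensional statement to a one-dimensional one. First I would observe that it suffices to prove the inequality when $S_1$ and $S_2$ are open and $S_3 = K \setminus (S_1 \cup S_2)$, and (by a standard approximation) assume $d(S_1,S_2) = t > 0$ and all three sets have positive measure. Then I would set up a contradiction: suppose for some logconcave $f$ on $K$ with diameter $D$ we had $\pi_f(S_3) < \frac{2t}{D} \min\{\pi_f(S_1), \pi_f(S_2)\}$. The goal is to ``localize'' this violation down to a one-dimensional needle, where it can be contradicted by a direct computation.

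The main tool is the localization lemma of Lov\'asz and Simonovits: if a violation of a linear inequality of the form $\int_K g \, d\mu > 0$ and $\int_K h \, d\mu > 0$ fails simultaneously (here $g, h$ encode the two sides of the isoperimetric inequality as signed combinations of the indicators of $S_1, S_2, S_3$ weighted by $f$), then there is a one-dimensional segment $[a,b] \subseteq K$ and a linear function $\ell \ge 0$ on $[a,b]$ such that the same inequalities are violated with $\mu$ replaced by $\ell(s)^{n-1} \, ds$ along the segment. Concretely, I would apply localization to the functions built from $\pi_f(S_1) - \pi_f(S_3)\cdot \frac{D}{2t}$ and $\pi_f(S_2) - \pi_f(S_3)\cdot \frac{D}{2t}$; the output is a segment of length at most $D$ on which $f$ restricted to the segment, times $\ell^{n-1}$, is a density (note $f$ restricted to a line is logconcave, and $\ell^{n-1}$ is logconcave, so the product is logconcave in one variable), and on which the isoperimetric inequality is violated with the partition induced by intersecting $S_1, S_2, S_3$ with the segment.

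The heart of the argument is then the one-dimensional case: given a logconcave density $h$ on an interval $[a,b]$ of length $D$, and a partition of $[a,b]$ into $A_1, A_2, A_3$ with $d(A_1, A_2) \ge t$, show $\int_{A_3} h \ge \frac{2t}{D}\min\{\int_{A_1}h, \int_{A_2}h\}$. Here I would use that $A_3$ must contain an interval of length at least $t$ separating $A_1$ from $A_2$ (after discarding the trivial case where $A_1$ or $A_2$ is empty), and exploit logconcavity of $h$: the ``worst case'' for the ratio $\int_{A_3}h / \min\{\int_{A_1}h,\int_{A_2}h\}$ is pushed, by a standard variational argument on logconcave functions, to $h$ being constant (or an exponential), for which the bound follows by an elementary estimate comparing the mass of a middle sub-interval of length $t$ to the mass of a side of length up to $D$. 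I expect the main obstacle to be this one-dimensional extremal analysis: carefully justifying that logconcavity lets us reduce to the linear/exponential extreme case and then getting the constant $2t/D$ (rather than something weaker like $t/D$) right — this is where the factor of $2$ comes from, and it requires the tight elementary integral comparison rather than a crude bound.
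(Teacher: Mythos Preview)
The paper does not actually prove this statement: Theorem~\ref{thm:lcIsoperimetry} is quoted from \cite{DF90,LS93} as background, with no proof given. That said, your outline is precisely the standard localization argument from those references, and it closely parallels the proof the paper \emph{does} give for its own isoperimetric result (Theorem~\ref{thm:iso-unimodal}): both apply the Lov\'asz--Simonovits localization lemma (Lemma~\ref{lemma:local}) to a hypothetical violation, reduce to a one-dimensional inequality on a needle with weight $f\cdot l^{n-1}$, and finish with a 1D integral lemma. The only substantive difference is the 1D step: the paper, needing only unimodality, uses its Lemma~\ref{lemma:interval} and obtains the constant $1$; the logconcave case you are asked about yields the sharper constant $2$, which indeed requires the logconcave extremal analysis you describe rather than mere unimodality. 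Your identification of that as the crux is correct. The one place your sketch is loose is the phrase ``standard variational argument'' reducing to the exponential case --- in the cited references this is made precise as an explicit 1D inequality for logconcave $h$, and you should invoke that statement directly rather than leave it as a heuristic.
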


While these results are fairly general, they do not capture the complete class of functions which have good isoperimetry. Logconcave functions in $\reals^n$ are defined by a convex-combination based condition for every two points in the support of the function, saying that the function is logconcave along every line.  
This is a generalisation of the case of convex bodies where we have that the line segment connecting any two points in the body lies completely within the body. This motivates the following question: What is the condition that needs to be satisfied along every line by a function, for it to have good isoperimetry?

In this paper, we present the complete class of functions with good isoperimetry that can be described by such convex-combination based conditions. We also give an efficient algorithm to sample from these functions. Further, we identify the Cauchy density (which is not logconcave) as a well-known example in this set of functions and obtain an efficient algorithm for sampling the multivariate Cauchy density restricted to a convex body. Its complexity matches the best-known bounds for logconcave functions. We note that the density functions satisfying the convex-combination based characterization that we present here, could be heavy-tailed with unbounded moments as is the case with the Cauchy density.


To motivate and state our results, we begin with a discussion of $1$-dimensional conditions.

\subsection{From concave to quasi-concave}

A function $f: \reals^{n} \rightarrow \reals_+$ is said to be
\[
\begin{cases}
\text{\emph{concave} if},
& f\left(\lambda x +(1-\lambda)y\right) \geq \lambda f(x) + (1-\lambda)f(y)\\
& \\

\text{\emph{logconcave} if},
& f\left(\lambda x +(1-\lambda)y\right) \geq f(x)^\lambda f(y)^{1-\lambda}\\
&\\

\text{\emph{$s$-harmonic-concave} if},
& f\left(\lambda x +(1-\lambda)y\right) \geq \left({\frac{\lambda}{f(x)^s}+\frac{1-\lambda}{f(y)^s}}\right)^{-\frac{1}{s}}\\
&\\

\text{\emph{quasi-concave} if},
& f\left(\lambda x +(1-\lambda)y\right) \geq \min\{f(x), f(y)\}\\
\end{cases}\\
\]
for all $\lambda \in [0,1], \forall x,y \in \R^n$.\\

These conditions are progressively weaker, restricting the function value at a convex combination of $x$ and $y$ to be at least the arithmetic average, geometric average, harmonic average and minimum, respectively. Note that $s=1$ gives the usual harmonic average (and $s_1$-harmonic-concave functions are also $s_2$-harmonic-concave if $s_1 < s_2$). It is thus easy to verify that:
\[
\text{concave} \subsetneq \text{logconcave} \subsetneq \text{$s$-harmonic-concave} \subsetneq \text{quasi-concave}.
\]
Relaxing further would violate unimodality, i.e., there could be two distinct local maxima, which appears quite problematic for all of the fundamental problems. Also, it is well-known that quasi-concave functions have poor isoperimetry.


We note here the relation between the $s$-harmonic-concave probability density function and the probability measure as shown by C. Borell \cite{B74,B75}. This gives an equivalence between the one-dimensional convexity-like condition to a condition on the probability measure of the function.
\begin{lemma}\label{lemma:borell}
Let $-\infty< \kappa \leq \frac{1}{n}$. An absolutely continuous probability measure $\mu$ on $\reals ^n$ is $\kappa$-concave if and only if it is concentrated on an open convex set $K$ in $\reals ^n$ and has there a positive density $p$, which is $\kappa(n)-concave$ for $\kappa(n)=\frac{\kappa}{1-\kappa n}$.
\end{lemma}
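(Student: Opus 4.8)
The statement is an equivalence, so I would prove the two implications separately, expecting the forward direction (density $\Rightarrow$ measure) to be the soft one and the converse (measure $\Rightarrow$ density, \emph{with the sharp exponent} $\kappa(n)=\kappa/(1-\kappa n)$) to be the crux. Throughout, $M_s^{1-\lambda,\lambda}$ denotes the weighted $s$-th power mean, with the usual limiting conventions at $s\in\{0,\pm\infty\}$.

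For the ``if'' direction, suppose $\mu$ is concentrated on an open convex set $K$ with a positive, $\kappa(n)$-concave density $p$ there. Fix Borel sets $A,B\subseteq K$ and $\lambda\in(0,1)$ and apply the Borell--Brascamp--Lieb (BBL) integral inequality to $f=p\mathbf{1}_A$, $g=p\mathbf{1}_B$, $h=p\mathbf{1}_{(1-\lambda)A+\lambda B}$: for $u\in A$, $v\in B$ we have $(1-\lambda)u+\lambda v\in(1-\lambda)A+\lambda B$ and $p((1-\lambda)u+\lambda v)\ge M_{\kappa(n)}^{1-\lambda,\lambda}(p(u),p(v))$ by $\kappa(n)$-concavity of $p$, which is exactly the pointwise hypothesis of BBL. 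Its conclusion, $\int h\ge M^{1-\lambda,\lambda}_{\kappa(n)/(1+n\kappa(n))}(\int f,\int g)$, reads $\mu((1-\lambda)A+\lambda B)\ge M^{1-\lambda,\lambda}_{\kappa}(\mu(A),\mu(B))$, once one checks $\kappa(n)/(1+n\kappa(n))=\kappa$ directly from the definition of $\kappa(n)$. To keep this self-contained I would include BBL's standard proof by induction on the dimension: the one-dimensional case follows by pushing $f,g$ forward through their own quantile maps $u(\cdot),v(\cdot)$ (so $u'=\int f/f(u)$, $v'=\int g/g(v)$), reducing the bound to the scalar inequality
\[
M_\beta^{1-\lambda,\lambda}(a,b)\cdot M_1^{1-\lambda,\lambda}\!\left(\tfrac{P}{a},\tfrac{Q}{b}\right)\;\ge\;M_{\beta/(1+\beta)}^{1-\lambda,\lambda}(P,Q),
\]
a (reverse-)H\"older/power-mean product inequality since $\tfrac1\beta+1=\tfrac{1+\beta}{\beta}$; the inductive step is Fubini in one coordinate plus the one-dimensional case, tracking how $\beta$ degrades to $\beta/(1+n\beta)$ after $n$ slices.

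For the ``only if'' direction, first apply $\kappa$-concavity to small balls $A=x+\eps\mathbb{B}$, $B=y+\eps\mathbb{B}$ around support points $x,y$: the combination has positive measure, so the support is convex, and by absolute continuity it carries a density $p$. The heart of the matter is to recover the \emph{sharp} exponent for $p$, and here the equal-radius test is too lossy---it gives only that $p$ is $\kappa$-concave, whereas $\kappa(n)\ge\kappa$. Instead I would test $\kappa$-concavity on balls of unequal radii: with $A=x+r\mathbb{B}$, $B=y+s\mathbb{B}$ we get $(1-\lambda)A+\lambda B=((1-\lambda)x+\lambda y)+((1-\lambda)r+\lambda s)\mathbb{B}$, and letting $r,s\to0$ with $t:=s/r$ fixed, the Lebesgue density theorem (at a.e.\ $x,y$ and their combination) yields
\[
p\big((1-\lambda)x+\lambda y\big)\,\big((1-\lambda)+\lambda t\big)^n\;\ge\;M_\kappa^{1-\lambda,\lambda}\big(p(x),\,t^n p(y)\big)\qquad\text{for all }t>0 .
\]
Optimising the right-hand side over $t>0$---the maximiser is $t=(p(y)/p(x))^{\kappa(n)}$, equal to $1$ exactly when $\kappa=0$---collapses this one-parameter family of inequalities to $p((1-\lambda)x+\lambda y)\ge M_{\kappa(n)}^{1-\lambda,\lambda}(p(x),p(y))$ after a short computation identifying the exponent as $\kappa/(1-\kappa n)$. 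Finally, since this holds only almost everywhere, I would pass to the upper-semicontinuous (equivalently, pointwise-largest $\kappa(n)$-concave) representative of $p$ on $K$, which is genuinely $\kappa(n)$-concave and agrees with $p$ a.e.

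The hard part will be the optimisation-over-radii step in the converse: confirming that this family of ball inequalities sharpens \emph{exactly} to the exponent $\kappa(n)$ rather than merely to $\kappa$, handling the sign cases $\kappa>0$, $\kappa=0$, $\kappa<0$ together with the degenerate limits $\kappa\to 1/n$ (where the optimal $t$ is not attained) and $\kappa\to-\infty$, and ensuring that the a.e.-to-everywhere regularisation of $p$ does not enlarge the support.
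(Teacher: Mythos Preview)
The paper does not supply its own proof of this lemma: it is quoted as a known result and attributed to Borell \cite{B74,B75}, with no argument given beyond the citation. There is therefore nothing in the paper to compare your attempt against.

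That said, your outline is essentially the standard modern proof of Borell's theorem. The ``if'' direction via the Borell--Brascamp--Lieb inequality with the exponent bookkeeping $\kappa(n)/(1+n\kappa(n))=\kappa$ is exactly how this implication is usually packaged today. For the converse, testing $\kappa$-concavity on balls of \emph{unequal} radii and then optimising over the ratio $t=s/r$ to squeeze out the sharp exponent $\kappa(n)$ (rather than the weaker $\kappa$ one gets from equal radii) is precisely Borell's original idea. Two small remarks: in your optimisation step, be careful that you are taking the \emph{supremum} of the right-hand side divided by $((1-\lambda)+\lambda t)^n$ over $t>0$ (since the inequality must hold for every $t$, the strongest constraint on $p$ at the midpoint comes from the worst $t$); and your claimed maximiser $t=(p(y)/p(x))^{\kappa(n)}$ should be double-checked for sign conventions when $\kappa<0$, though the conclusion is correct. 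The a.e.-to-everywhere regularisation at the end is a genuine technical point that Borell handles, and your plan to pass to the upper-semicontinuous representative is the right move.
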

Thus, we have that if the density function is $s$-harmonic-concave for $s\in[0,1/n]$, then the corresponding probability measure is $\kappa$-concave for $\kappa=\frac{-s}{1-ns}$. Further, Bobkov \cite{bobkov} proves the following isoperimetric inequality for $\kappa$-concave probability measures for $-\infty<\kappa\leq 1$.
\begin{theorem} Given a $\kappa$-concave probability measure $\mu$, for any measurable subset $A\subseteq\reals^n$,
\[
\mu(\delta A)\geq \frac{c(\kappa)}{m}\min\{\mu(A),1-\mu(A)\}^{1-\kappa}
\]
where $m=\int_{\reals^n}|x|d\mu(x)$, for some constant $c(\kappa)$ depending on $\kappa$.
\end{theorem}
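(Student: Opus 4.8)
The plan is to prove this Cheeger-type inequality by the \emph{localization method} of Lov\'asz--Simonovits and Kannan--Lov\'asz--Simonovits \cite{LS93,KLS97}, in the form available for $\kappa$-concave measures \cite{bobkov}, reducing the $n$-dimensional statement to a one-dimensional one. If $m=\infty$ or $\mu(A)\in\{0,1\}$ the inequality is vacuous, so assume $m<\infty$ and $0<\mu(A)<1$. First I would replace the Minkowski boundary measure by a three-set quantity: for small $\eps>0$ put $S_1=A$, $S_2=\reals^n\setminus A_\eps$ (with $A_\eps$ the open $\eps$-neighborhood of $A$) and $S_3=A_\eps\setminus A$, so that $d(S_1,S_2)\ge\eps$ and $\mu(S_3)=\mu(A_\eps)-\mu(A)$. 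Thus it suffices to prove, for \emph{every} partition of $\reals^n$ into $S_1,S_2,S_3$,
\[
\mu(S_3)\ \ge\ \frac{c(\kappa)\,d(S_1,S_2)}{m}\,\min\{\mu(S_1),\mu(S_2)\}^{1-\kappa},
\]
and then divide by $\eps$ and let $\eps\to 0$.

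For the three-set inequality I would argue by contradiction. By a balancing argument one reduces to the symmetric case $\mu(S_1)=\mu(S_2)=v$, so the right-hand side is the fixed quantity $c(\kappa)\,d(S_1,S_2)\,v^{1-\kappa}/m$; negating the inequality and adjoining the normalization $\int_{\reals^n}(m-|x|)\,d\mu(x)=0$ (valid since $\mu$ is a probability measure of first moment $m$) yields a finite system of integral constraints $\int g_i\,d\mu\ge 0$ together with one that is violated. Feeding this into the localization lemma for $\kappa$-concave measures collapses it to a one-dimensional \emph{needle}: an interval $I$ (a segment, ray, or line in $\reals^n$) carrying a probability measure $\nu$ whose density along $I$ is of the form $\ell^{\alpha}$ for an affine $\ell\ge 0$ and an exponent $\alpha=\alpha(\kappa,n)$ forced by $\kappa$-concavity, with $\int|x|\,d\nu\le m$, and on which the negated three-set inequality still fails for the induced partition. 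This eliminates the dimension.

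It then remains to settle the one-dimensional problem: for such a density $q$ on $I$ and a partition $I=S_1\cup S_2\cup S_3$ with $S_3$ separating $S_1$ from $S_2$,
\[
\int_{S_3}q\ \ge\ \frac{c(\kappa)\,d(S_1,S_2)}{m_q}\,\min\Big\{\int_{S_1}q,\ \int_{S_2}q\Big\}^{1-\kappa},\qquad m_q:=\int_I|t|\,q(t)\,dt.
\]
In one dimension the worst partition has $S_1$ and $S_2$ as the intervals flanking a middle interval $S_3=[a,b]$; then $\int_a^b q\ge(b-a)\min_{[a,b]}q$, while the tail masses $\int_{S_1}q$ and $\int_{S_2}q$ are controlled by $\min_{[a,b]}q$ using that $q$ has only polynomial tail decay (of order $1/|\kappa|$), together with $\int_I q=1$ and $\int_I|t|q=m_q$. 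Tracking constants in this estimate produces exactly the exponent $1-\kappa$ and the factor $1/m_q$, and since $m_q\le m$ the claimed bound follows with an explicit $c(\kappa)$.

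The step I expect to be the main obstacle is the localization: for $\kappa<0$ a $\kappa$-concave measure may have infinite total mass before truncation and only heavy, polynomially decaying tails, so the log-concave localization lemma cannot be used as a black box and one needs the version covering the full $\kappa$-concave range (this is also the reason the inequality is phrased with the first moment $m$, and why reducing to $m<\infty$ is not a mere formality). Granting the correct localization lemma, the remaining one-dimensional computation is elementary, the only care being the separate treatment of the cases where $\ell$ vanishes at an endpoint of $I$ (so $q$ blows up there) and where it stays bounded away from zero.
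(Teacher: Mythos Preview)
The paper does not prove this theorem at all: it is quoted in the introduction as a result of Bobkov \cite{bobkov}, with no proof given, and is used only to contrast with the paper's own isoperimetric inequality (Theorem~\ref{thm:isoperimetric}). So there is no ``paper's own proof'' to compare your proposal against.

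That said, your plan is the right one and is essentially Bobkov's argument: localization to reduce to a one-dimensional $\kappa$-concave measure, followed by a direct analysis of the one-dimensional case where the exponent $1-\kappa$ and the first-moment normalization $m$ emerge from the polynomial tail behavior. Two remarks on your sketch. First, the ``balancing argument'' reducing to $\mu(S_1)=\mu(S_2)$ is not needed and is in fact awkward here, since the right-hand side is not symmetric in $\mu(S_1),\mu(S_2)$ once you raise the minimum to the power $1-\kappa$; it is cleaner to localize the asymmetric inequality directly. Second, the phrase ``adjoining the normalization $\int(m-|x|)\,d\mu=0$'' is not how the first moment enters: one does not impose it as an extra constraint in the localization, but rather carries the quantity $\int|x|\,d\nu$ along the needle and uses only the inequality $\int|x|\,d\nu\le m$ at the end (as you in fact do in the one-dimensional step). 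With those adjustments your outline matches the original proof.
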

Using the characterisation given by C.Borell in Lemma \ref{lemma:borell}, one can obtain a weaker form of isoperimetric inequality from the above theorem for $s$-harmonic-concave functions $f:\reals^n\rightarrow \reals_+$ for $s\geq \frac{1}{n}$ to say that for any measurable subset $A$, for some constant $c(s)$ depending only on $s$, 
\[
\pi_f(\delta A)\geq \frac{c(s)}{m}\min\{\pi_f(A),1-\pi_f(A)\}^{1+\frac{s}{1-ns}}
\]
We note that our result gives a stronger inequality, in the sense that, we remove the dependence on $s$ from the inequality completely. We prove such an inequality for the more general class of $\left(\frac{1}{n-1}\right)$-harmonic-concave functions rather than $\left(\frac{1}{n}\right)$-harmonic-concave functions. We proceed to show that the limit of isoperimetry for convex-combination based conditions along every line for the density function, is the set of $(1/(n-1))$-harmonic-concave functions. Further, we address the problem of sampling from such  distributions restricted to any convex set (possibly unbounded).

\subsection{The Cauchy density} \label{sec:cauchy}
The generalized Cauchy probability density $f: \reals^{n} \rightarrow \reals_{+}$ can be written as
\[
f(x) \propto \frac{\det(A)^{-1}}{\left(1 + \sqnorm{A(x-m)}\right)^{(n+1)/2}}.
\]
where $A\in \reals_{n\times n}$. For simplicity, we assume $m = \bar{0}$ using a translation.
It is easy to sample this distribution in full space (by an affine transformation it becomes spherically symmetric and therefore a one-dimensional problem) \cite{johnson87Book}. We consider the problem of sampling according to the Cauchy density restricted to a convex set. This is reminiscent of the work of Kannan and Li who considered the problem of sampling a Gaussian distribution restricted to a convex set \cite{kannanLi}.

The Cauchy density function belongs to the broader class of L\'{e}vy skew alpha-stable distributions (or simply known as stable distributions) \cite{mandel,nolan} which are useful in modeling many variables in physics and mathematical finance. Unlike most stable distributions, Cauchy densities have a closed form expression for their density.
Being a $1$-stable distribution, it finds useful applications in a variety of problems including the approximate nearest neighbor problem and dimension reduction in $l_{1}$ norm \cite{indyk06,diim,li07}.

\subsection{Our results}

Our first result establishes good isoperimetry for $1/(n-1)$-harmonic-concave functions in $\R^n$.

\begin{theorem} \label{thm:isoperimetric}
Let $f: \reals^{n} \rightarrow \reals_{+}$ be a $(1/(n-1))$-harmonic-concave function with a support $K$ . Let $\reals ^n = S_{1} \cup S_{2} \cup S_{3}$ be a measurable partition of $\reals ^n$ into three non-empty subsets. Then
\[
\pi_{f}(S_{3}) \geq \frac{d(S_{1}, S_{2})}{D} \min\left\{\pi_{f}(S_{1}), \pi_{f}(S_{2})\right\},
\]
where $D$ is the diameter of $K$.
\end{theorem}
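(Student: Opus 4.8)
The plan is to argue by contradiction using the Localization Lemma of Lov\'asz and Simonovits \cite{LS93}, which reduces an integral inequality on $\reals^n$ to a one-dimensional inequality along a line segment, and to identify $(1/(n-1))$-harmonic-concavity as exactly the hypothesis that makes the relevant one-dimensional density unimodal. First a few normalizations: if $D=\infty$ the bound is vacuous, so assume $K$ is bounded; and $K$ is convex, since a $(1/(n-1))$-harmonic-concave function is quasi-concave and hence has convex super-level sets. Since $\pi_f$ is proportional to $f$, the claim is equivalent to $\int_{S_3} f \ge \tfrac{d}{D}\min\{\int_{S_1}f,\int_{S_2}f\}$ with $d := d(S_1,S_2)>0$. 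Suppose this fails; then $\int_{S_1}f > \tfrac{D}{d}\int_{S_3}f$ and $\int_{S_2}f > \tfrac{D}{d}\int_{S_3}f$, so the functions $g := f(\mathbf 1_{S_1} - \tfrac{D}{d}\mathbf 1_{S_3})$ and $h := f(\mathbf 1_{S_2} - \tfrac{D}{d}\mathbf 1_{S_3})$ both have strictly positive integral. The Localization Lemma then produces a segment, parametrized by arc length via $\gamma$, and an affine $\ell \ge 0$ on it such that, writing $d\nu := \ell(t)^{n-1} f(\gamma(t))\,dt$ and $U_i := \gamma^{-1}(S_i)$, we have $\nu(U_1) > \tfrac{D}{d}\nu(U_3)$ and $\nu(U_2) > \tfrac{D}{d}\nu(U_3)$.

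The heart of the argument is the shape of $\nu$. On the part of the segment lying in $\{f>0\}$ --- an interval, of diameter at most $D$, since $\{f>0\}$ is convex --- the restriction of $f$ to the line is itself $(1/(n-1))$-harmonic-concave, i.e. $G(t) := f(\gamma(t))^{-1/(n-1)}$ is a positive convex function there; hence the density of $\nu$ equals $(\ell(t)/G(t))^{n-1}$ on that interval and vanishes outside it. The function $\ell/G$ is unimodal: for $c\ge 0$ its super-level set $\{\ell \ge cG\} = \{cG - \ell \le 0\}$ is an interval because $cG-\ell$ is convex, and for $c<0$ the set is the whole line; so $(\ell/G)^{n-1}$, being a monotone function of $\ell/G$, is unimodal and supported on a set of diameter $\le D$. (It is at this step that the value $1/(n-1)$ is forced: for $s>1/(n-1)$ the weight $\ell(t)^{n-1}$ times an $s$-harmonic-concave restriction need not be unimodal.)

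It remains to prove the following one-dimensional fact, which then contradicts the conclusion of the Localization Lemma: if $w\ge 0$ is unimodal with mode $m$ and support of diameter $\le D$, and $U_1,U_2,U_3$ partition the line with $d(U_1,U_2)\ge d$, then $\nu(U_3)\ge\tfrac{d}{D}\min\{\nu(U_1),\nu(U_2)\}$ for $d\nu=w\,dt$. A standard reduction --- the complement of $U_3$ on the needle is a union of open intervals, each entirely inside $U_1$ or entirely inside $U_2$; merging adjacent intervals of the same type (which decreases $\nu(U_3)$ and does not decrease $\nu(U_1)$ or $\nu(U_2)$) and then collapsing to the extreme configuration \cite{LS93} --- reduces to the case $U_1=(-\infty,\alpha]$, $U_3=(\alpha,\beta)$, $U_2=[\beta,\infty)$ with $\gamma(\alpha)\in\overline{S_1}$, $\gamma(\beta)\in\overline{S_2}$, so that $\beta-\alpha\ge d$. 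Now split $U_3$ at the mode. If $m\le\alpha$, then $w$ is non-increasing on $[\alpha,\infty)$, so $\nu(U_3)=\int_\alpha^\beta w \ge (\beta-\alpha)\,w(\beta)$ while $\nu(U_2)\le D\,w(\beta)$, giving $\nu(U_3)\ge\tfrac{\beta-\alpha}{D}\nu(U_2)\ge\tfrac{d}{D}\nu(U_2)$; the case $m\ge\beta$ is symmetric. If $\alpha<m<\beta$, then $\int_\alpha^m w \ge (m-\alpha)\,w(\alpha) \ge \tfrac{m-\alpha}{D}\nu(U_1)$ and $\int_m^\beta w \ge (\beta-m)\,w(\beta) \ge \tfrac{\beta-m}{D}\nu(U_2)$, and adding yields $\nu(U_3) \ge \tfrac{\beta-\alpha}{D}\min\{\nu(U_1),\nu(U_2)\} \ge \tfrac{d}{D}\min\{\nu(U_1),\nu(U_2)\}$. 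In every case this forces $\int\ell^{n-1}g\le 0$ or $\int\ell^{n-1}h\le 0$, the desired contradiction.

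I expect the genuine obstacle to be the reduction to three consecutive intervals when $U_1$ and $U_2$ are truly interleaved along the needle: one must check that merging and collapsing preserve both the separation $d(U_1,U_2)\ge d$ and the strict inequalities, and the interleaved case with the mode landing inside a piece of $U_3$ requires the most bookkeeping. Everything else --- the Localization Lemma, the unimodality computation, and the final three-interval estimate (which, pleasantly, loses nothing and gives the clean constant $1$) --- is either quotable or elementary; the one new idea is the identification of $(1/(n-1))$-harmonic-concavity with unimodality of the localized density.
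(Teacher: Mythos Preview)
Your proposal is correct and follows essentially the same route as the paper: contradiction via the Localization Lemma, reduction to a one-dimensional unimodal density, and a direct interval estimate (the paper's Lemma~\ref{lemma:interval} and Theorem~\ref{thm:iso-unimodal}, with Proposition~\ref{prop:harmodal} supplying unimodality). Your level-set argument for unimodality of $(\ell/G)^{n-1}$ is in fact cleaner than the paper's second-derivative computation in Proposition~\ref{prop:harmodal}, since it needs no smoothness of $f$; and your explicit three-case split at the mode is exactly the content of the paper's Lemma~\ref{lemma:interval}. The interleaving worry you flag is real but minor: the paper handles it (somewhat tersely) by summing Lemma~\ref{lemma:interval} over the maximal intervals of $Z_3$ and picking the one across which $Z_1$ and $Z_2$ switch, which is equivalent to your merging/collapsing reduction.
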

It is worth noting that the isoperimetric coefficient above is only smaller by a factor of $2$ when compared to that of logconcave functions (Theorem \ref{thm:lcIsoperimetry}).

Next, we prove that if we go slightly beyond the class of $(1/(n-1))$-harmonic-concave functions, then there exist functions with exponentially small isoperimetric coefficient.

\begin{theorem}\label{thm:limitIsoperimetry}
For any $\eps > 0$, there exists a $1/(n-1-\eps)$-harmonic-concave function $f: \reals^{n} \rightarrow \reals_{+}$ with a convex support $K$ of finite diameter (i.e., $D_{K} < \infty$) and a partition $\reals^{n} = S \cup T$ such that
\[
\frac{\pi_{f}(\partial S)}{\min\left\{\pi_{f}(S), \pi_{f}(T)\right\}} \leq Cn(1+\eps)^{-\eps n}
\]
for some constant $C>0$.
\end{theorem}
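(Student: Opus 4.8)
The plan is to exhibit an explicit witness: a long truncated cone carrying a weight that is a negative power of an affine function, cut by a hyperplane orthogonal to the cone axis. Fix a small $\delta>0$ and a large finite $L$ (chosen last), let $K=\{x\in\reals^{n}:0\le x_{1}\le L,\ \norm{(x_{2},\dots,x_{n})}\le 1+x_{1}\}$ --- a truncated cone whose slice at height $x_{1}$ is an $(n-1)$-ball of radius $1+x_{1}$ --- and put $f(x)=(\delta+x_{1})^{-(n-1-\eps)}$ on $K$. Since $f^{-1/(n-1-\eps)}=\delta+x_{1}$ is affine, $f$ is $1/(n-1-\eps)$-harmonic-concave; since $f^{-1/(n-1)}=(\delta+x_{1})^{(n-1-\eps)/(n-1)}$ is a \emph{strictly concave} power of an affine function, $f$ is not $1/(n-1)$-harmonic-concave, so it lies exactly in the ``slightly beyond'' regime excluded by Theorem~\ref{thm:isoperimetric}. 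The partition will be $S=\{x:x_{1}\le x^{\ast}\}$, $T=\{x:x_{1}>x^{\ast}\}$, with $x^{\ast}$ chosen below, and $\pi_{f}(\partial S)$ understood as a boundary (Minkowski-content) measure, i.e.\ the surface integral of the normalized density over $\{x_{1}=x^{\ast}\}$; without this reading the stated inequality is vacuous for the absolutely continuous $\pi_{f}$.

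\emph{Reduction to one dimension.} Because $f$ depends only on $x_{1}$ and the slices are balls of radius $1+x_{1}$, the image of $\pi_{f}$ under $x\mapsto x_{1}$ has density proportional to $g(x_{1}):=(\delta+x_{1})^{-(n-1-\eps)}(1+x_{1})^{n-1}$ on $[0,L]$, and for the chosen cut one gets $\pi_{f}(\partial S)\propto g(x^{\ast})$, $\pi_{f}(S)\propto\int_{0}^{x^{\ast}}g$, $\pi_{f}(T)\propto\int_{x^{\ast}}^{L}g$ with a common normalization. Hence the $n$-dimensional isoperimetric ratio for this cut equals the one-dimensional Cheeger ratio $g(x^{\ast})/\min\{\int_{0}^{x^{\ast}}g,\ \int_{x^{\ast}}^{L}g\}$.

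\emph{Producing a bottleneck in $g$.} A computation gives $(\log g)'(x_{1})=\frac{n-1}{1+x_{1}}-\frac{n-1-\eps}{\delta+x_{1}}$, which has a unique zero at $x^{\ast}=\big((n-1-\eps)-(n-1)\delta\big)/\eps\approx(n-1)/\eps$, is negative for $x_{1}<x^{\ast}$ and positive for $x_{1}>x^{\ast}$; so $g$ is strictly decreasing and then strictly increasing --- a genuine valley. (For a $1/(n-1)$-harmonic-concave $f$ the analogous reduced density is $(\tfrac{1+x_{1}}{\psi})^{n-1}$ with $\psi$ convex, which is always unimodal; the extra factor $(1+x_{1})^{\eps}$ present when the exponent is $n-1-\eps$ rather than $n-1$ is exactly what breaks unimodality, so this is where the threshold $1/(n-1)$ enters.) I then estimate the valley depth $g(x^{\ast})=\Theta\big((x^{\ast})^{\eps}\big)=\Theta\big((n/\eps)^{\eps}\big)$; the left mass, dominated by the apex spike of $f$, satisfies $\int_{0}^{x^{\ast}}g\ge\int_{0}^{1}(\delta+x_{1})^{-(n-1-\eps)}dx_{1}\approx\delta^{-(n-2-\eps)}/(n-2-\eps)$ (here $\eps<n-2$ is used); and, choosing $L$ large, $\int_{x^{\ast}}^{L}g\approx L^{\eps+1}/(\eps+1)$ can be made at least the left mass, so the minimum above is $\pi_{f}(S)$. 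This yields an isoperimetric ratio $O\big((n-2-\eps)\,(n/\eps)^{\eps}\,\delta^{\,n-2-\eps}\big)$.

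\emph{Choosing parameters, and the obstacle.} Since $\delta$ is free, taking $\delta^{\,n-2-\eps}$ of order $(1+\eps)^{-\eps n}$ makes this ratio at most $Cn(1+\eps)^{-\eps n}$ for a suitable absolute $C$ (the factor $(n-2-\eps)$ and the sub-polynomial $(n/\eps)^{\eps}$ are absorbed into $Cn$), while $K$ still has finite diameter $\Theta(L)$. The step I expect to be the main obstacle is the bottleneck analysis: verifying with honest constants that $g$ has exactly one interior critical point, and that the left mass really is $\gtrsim\delta^{-(n-2-\eps)}$ --- one must check that the $O(2^{n})$-type contribution of the range $[1,x^{\ast}]$ and the $e^{\eps}(n/\eps)^{\eps}$ correction hidden in $g(x^{\ast})$ do not overwhelm it, which they do not once $\delta$ is taken small enough, and that $L$ is then still finite. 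A secondary point is that this family of examples requires $\eps<n-2$ (so that the apex spike of $f$ carries divergent mass); for $\eps\ge n-2$ the class of $1/(n-1-\eps)$-harmonic-concave functions is already essentially quasi-concave, where an exponentially bad cut can be produced by cruder, separately handled means.
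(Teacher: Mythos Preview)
Your construction is essentially the paper's: a truncated cone carrying the weight $(\text{affine})^{-(n-1-\eps)}$, reduced to a one-dimensional density $g$ with a valley, and the cut placed at the minimum of $g$. The paper's cone narrows (radius $1-x_{1}$ on $[0,1/(1+\delta))$) with the weight singular at the truncated narrow end, whereas yours widens with the singularity placed just outside the narrow base; up to an affine reparametrization these are the same picture, and the bottleneck analysis is the same computation.

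The one substantive difference is diameter. By putting the singularity at the narrow truncated end and taking $\delta=(1+\eps)^{-n}$, the paper keeps $\mathrm{diam}(K)\le 2$ independent of $n$ and $\eps$, so its bound directly witnesses failure of the inequality in Theorem~\ref{thm:isoperimetric} (which carries $D$ in the denominator). Your $K$ has diameter $\Theta(L)$ with $L$ forced to be of order $(1+\eps)^{\Theta(n)}$ to make the right mass match the left; this still proves the theorem as literally stated (and even $D$ times your ratio remains exponentially small), but the bounded-diameter version is the cleaner counterexample. One small correction: $(n/\eps)^{\eps}$ is not sub-polynomial in $n$ --- it is $n^{\eps}$ --- so your prefactor is of order $n^{1+\eps}$ rather than $n$; this is harmless, since the extra $n^{\eps}$ can be absorbed by taking $\delta$ slightly smaller, and the paper's own calculation carries the same $n^{\eps}$ artifact.
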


To summarize, we have the following table for isoperimetry:\\
\begin{center}
\begin{tabular}{|c|c|}
\hline
Nature of f & Good Isoperimetry? \\
\hline
Concave & Yes \\
Logconcave & Yes \\
$(1/(n-1))$-harmonic-concave & Yes \\
$(1/(n-1-\eps))$-harmonic-concave ($\eps > 0 $) & No \\
Harmonic-Concave & No \\
Quasi-Concave & No \\
\hline
\end{tabular}
\end{center}

Next we prove that the ball walk with a Metropolis filter can be used to sample efficiently according to the $(1/(n-1))$-harmonic concave distribution function which satisfy a certain Lipschitz condition. At a point $x$, the ball walk picks a new point $y$ uniformly at random from a fixed radius ball around $x$ and moves to $y$ with probability $\min\{1,f(y)/f(x)\}$. A distribution $\sigma_0$ is said to be an $H$-warm start ($H>0$) for the distribution $\pi_f$ if for all $S\subseteq \reals^n$, $\sigma_0(S)\leq H\pi_f(S)$. Let $\sigma_m$ be the distribution after $m$ steps of the ball walk with a Metropolis filter.

We say that a function $f:\reals^n\rightarrow \reals_+$ has parameters $(\alpha,\delta)$ if for all points $x,y$ in the support of $f$ such that $\norm{x-y}\leq \delta$, we have $\max\{f(x)/f(y),f(y)/f(x)\}\leq \alpha$.
%
\begin{theorem}\label{thm:sampling_hc}
Let $f:\reals^n\rightarrow \reals_+$ be proportional to an $s$-harmonic-concave function with parameters $(\alpha,\delta)$, restricted to a convex body $K\subseteq \reals^n$ of diameter $D$, where $s\leq 1/(n-1)$. Let $K$ contain a ball of radius $\delta$ and $\sigma_0$ be an $H$-warm start. Then, there exists a radius $r$ for the ball walk such that, after
\[
m\geq \left(\frac{CnD^2}{\delta^2}\log{\frac{2H}{\eps}}\right)\cdot \max\left\{\frac{nH^2}{\eps^2},\frac{(\alpha^{s}-1)^2}{s^2\delta^2}\right\}
\]
steps, we have that
\[
d_{tv}(\sigma_m,\pi_f)\leq \eps,
\]
for some absolute constant C, where $d_{tv}(\cdot,\cdot)$  is the total variation distance.
\end{theorem}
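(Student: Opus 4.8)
The plan is to run the standard conductance analysis for the (reversible) Metropolis ball walk: the mixing bound from an $H$-warm start reduces to (i) an isoperimetric inequality for $\pi_f$, (ii) a uniform lower bound on the one-step overlap $1-d_{tv}(P_x,P_y)$ for nearby points $x,y$, where $P_x$ denotes the distribution of the walk after one step started at $x$, and (iii) the assembly of (i) and (ii) into a lower bound on the $s$-conductance $\Phi_s$, after which the theorem follows from the Lov\'asz--Simonovits inequality $d_{tv}(\sigma_m,\pi_f)\le Hs+\frac{H}{\sqrt s}(1-\Phi_s^2/2)^m$ with $s=\eps/(2H)$. Ingredient (i) is immediate: since $s\le 1/(n-1)$, $f$ is also $(1/(n-1))$-harmonic-concave, a property preserved on restriction to $K$, so Theorem~\ref{thm:isoperimetric} applies with isoperimetric coefficient $1/D$. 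The only additional structural property of $f$ we use is that $g:=f^{-s}$ is convex on $K$ --- which is exactly the $s$-harmonic-concavity condition rewritten --- together with the $(\alpha,\delta)$ parameter.

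The heart of the proof, and the point of departure from the logconcave case, is controlling the Metropolis filter for a density that may decay only polynomially (such as the Cauchy density), where a single proposed step could land where $f$ is tiny and be rejected. The key observation is that, because $g=f^{-s}$ is convex, the $(\alpha,\delta)$ condition --- which says $g(y)/g(x)\le\alpha^{s}$ whenever $\norm{x-y}\le\delta$ --- forces $\norm{\nabla g(x)}\le\frac{\alpha^{s}-1}{\delta}\,g(x)$ at points $x$ of $K$ at distance at least $\delta$ from the boundary: evaluating the convex function $g$ at $x\pm\delta u$ along a unit vector $u$ bounds its one-sided derivatives from above and below. Hence, along a step of length at most $r$, $g$ changes by a relative amount $O\!\left(\frac{(\alpha^{s}-1)r}{\delta}\right)$ (including the convexity second-order term), and since $f=g^{-1/s}$ the density $f$ decreases by a relative amount $O\!\left(\frac{(\alpha^{s}-1)r}{s\delta}\right)$ along such a step; this is the estimate that produces the second term in the maximum in the statement. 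We then set the ball-walk radius $r$ to be a small constant times the smallest of $\delta/\sqrt n$ (so that a radius-$r$ ball around an interior point stays in $K$, using that $K$ contains a $\delta$-ball), of the quantity of order $s\delta/(\alpha^{s}-1)$ coming from the estimate above (so that the average Metropolis rejection is below a fixed constant), and of $\eps\delta/(H\sqrt n)$ (so that the local conductance --- the chance a proposed step is both valid and accepted --- is at least $1-\eps/(2H)$ on average under $\pi_f$, the near-boundary region being absorbed into the $s$-conductance bound using that convexity of $g$ keeps $f$ from concentrating near $\partial K$). I expect this local-conductance estimate, rather than any later step, to be the real technical obstacle, since it is precisely where $s$-harmonic-concavity must do the work logconcavity usually does.

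With $r$ fixed, ingredient (ii) is routine ball-walk geometry: for $\norm{x-y}\le c\,r/\sqrt n$ with $c$ a small absolute constant, the uniform proposals on $B(x,r)$ and $B(y,r)$ share a constant fraction of their mass; on that overlap the acceptance densities $\min\{1,f(\cdot)/f(x)\}$ and $\min\{1,f(\cdot)/f(y)\}$ agree up to a constant factor because $f(x)$ and $f(y)$ do (again from the gradient bound on $g$); and each rejection mass is below a fixed constant; hence $d_{tv}(P_x,P_y)\le 1-c'$ for an absolute $c'>0$. Feeding (ii) together with Theorem~\ref{thm:isoperimetric} into the usual argument --- for a set $S$ with $\pi_f(S)\in[\eps/(2H),1/2]$, the ``non-escaping'' parts of $S$ and of its complement are separated by distance $\Omega(r/\sqrt n)$, so the isoperimetric inequality lower-bounds the ergodic flow out of $S$ --- gives $\Phi_{\eps/(2H)}=\Omega(r/(D\sqrt n))$. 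Substituting this and the chosen value of $r$ into the Lov\'asz--Simonovits inequality yields the stated lower bound on $m$.
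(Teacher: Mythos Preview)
Your proposal is correct and follows essentially the same route as the paper: isoperimetry via Theorem~\ref{thm:isoperimetric}, a smoothness estimate turning the $(\alpha,\delta)$ parameter into a bound $f(y)/f(x)\ge 1/2$ on steps of length $O(s\delta/(\alpha^s-1))$, the set $K_r$ of points with good local conductance together with $\pi_f(K_r)\ge 1-O(r\sqrt{n}/\delta)$, a coupling lemma for $\norm{u-v}\le cr/\sqrt n$, and the conductance-to-mixing conversion with $s=\eps/(2H)$ and $r=c\min\{\eps\delta/(H\sqrt n),\,s\delta/(\alpha^s-1)\}$. The only discrepancies are cosmetic: the paper derives the smoothness estimate by a direct algebraic use of the $s$-harmonic-concave inequality (its Lemma~\ref{lem:parameters_hc}) rather than via $\nabla g$, and the form of the Lov\'asz--Simonovits bound you quote should have $H_s/s\le H$ in place of $H/\sqrt s$ (this only changes the constant).
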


Applying the above theorem directly to sample according to the Cauchy density, we get a  mixing time of $O\left(\left(\frac{n^3H^2}{\eps^2}\log{\frac{2H}{\eps}}\right)\cdot \max\left\{\frac{H^2}{\eps^2},n\right\}\right)$ using parameters $D=\frac{8\sqrt{2n}H}{\eps}$ (since the probability measure outside the $D$-ball is at most $\eps/2H$), $\delta=1$ and $\alpha=e^{\frac{n+1}{2}}$. Using a more careful analysis (comparison of $1$-step distributions), this bound can be improved to match the current best bounds for sampling logconcave functions.

\begin{theorem}\label{thm:cauchy_sampling}
Let $f$ be proportional to a Cauchy probability density restricted to a convex set $K\subseteq \reals ^n$ containing a ball of radius $\|A^{-1}\|_2$ and let $\sigma_0$ be an $H$-warm starting distribution. Then after
\[
m \geq O\left(\frac{n^3H^4}{\epsilon ^4}\log\frac{2H}{\epsilon}\right)
\]
steps with ball-walk radius $r=\eps/8\sqrt{n}$, we have
\[
d_{tv}(\sigma_{m},\pi_{f}) \leq \epsilon
\]
where $d_{tv}(.,.)$ is the total variation distance.
\end{theorem}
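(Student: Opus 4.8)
After the affine change of variables $x\mapsto A(x-m)$ the density becomes $f(x)\propto(1+\sqnorm{x})^{-(n+1)/2}$ and $K$ becomes a convex body containing $\ball{n}{1}$; I would work throughout in these coordinates, running the ball walk with Euclidean step radius $r=\eps/(8\sqrt n)$. Since $f(x)^{-2/(n+1)}=1+\sqnorm{x}$ is convex, $f$ is $\tfrac{1}{n+1}$-harmonic-concave, hence $\tfrac{1}{n-1}$-harmonic-concave, so Theorem~\ref{thm:isoperimetric} applies to $\pi_f$ and Theorem~\ref{thm:sampling_hc} already gives the polynomial bound displayed after it. That bound is lossy because it only uses that $f$ has parameters $(\alpha,\delta)=(e^{\Theta(n)},1)$, and a worst-case $(\alpha,\delta)$-smooth function can fall by the whole factor $\alpha$ inside a ball of radius $\delta$; keeping the local conductance bounded below then forces the step size all the way down to $r=\Theta(1/n)$. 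The refined analysis replaces this input by the explicit Cauchy form, for which $\norm{\nabla\log f(x)}=(n+1)\norm{x}/(1+\sqnorm{x})\le(n+1)/2$ by AM--GM, and $f$ is ``flatter'' still at the scale $r=\eps/(8\sqrt n)$ near its mode and in its tail; this is what lets us run the walk with that larger $r$.

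\textbf{Local conductance at $r=\eps/(8\sqrt n)$ (the crux).} Write $P_x$ for the one-step distribution of the Metropolis ball walk and $\ell(x)=\vol{\ball{n}{r}}^{-1}\int_{(\ball{n}{r}+x)\cap K}\min\{1,f(y)/f(x)\}\,dy$ for its local conductance. The heart of the argument is $\ell(x)\ge c$ for an absolute constant $c>0$ at every $x\in K$ lying at distance at least $r$ from $\partial K$. Fix a constant $C$ and split on $\norm{x}$. If $\norm{x}\le Cr$, then for every $y\in\ball{n}{r}+x$, using $\sqnorm{y}\le\sqnorm{x}+2\norm{x}r+r^2$ and $1+\sqnorm{x}\ge1$, one gets $f(y)/f(x)=\bigl((1+\sqnorm{x})/(1+\sqnorm{y})\bigr)^{(n+1)/2}\ge(1+(2C+1)r^2)^{-(n+1)/2}=e^{-O(C\eps^2)}$, so $\ell(x)\ge e^{-O(C\eps^2)}\,\vol{(\ball{n}{r}+x)\cap K}/\vol{\ball{n}{r}}=\Omega(1)$ since the step-ball lies in $K$. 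If $\norm{x}\ge Cr$, then $r$ is a small fraction of $\norm{x}$, so the part of $\ball{n}{r}+x$ lying in $\ball{n}{\norm{x}}$ is, up to relative error $O(1/C)$, a half-ball, and there $\norm{y}\le\norm{x}$ forces $f(y)\ge f(x)$ and acceptance probability $1$; hence $\ell(x)\ge\tfrac12-O(1/C)=\Omega(1)$ for $C$ a large enough constant. The first case is exactly where the explicit quadratic $1+\sqnorm{x}$ beats the generic $(\alpha,\delta)$ estimate.

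\textbf{One-step comparison and conductance.} For $u,v$ that are $r$-deep in $K$ with $\norm{u-v}\le r/2$, the step-balls $\ball{n}{r}+u$ and $\ball{n}{r}+v$ differ only by a set of exponentially small relative volume; writing $p_u,p_v$ for the densities of the ``move'' parts of $P_u,P_v$ and assuming w.l.o.g.\ $f(u)\ge f(v)$, one has $\min\{1,f(y)/f(u)\}\le\min\{1,f(y)/f(v)\}$ pointwise, so $\min(p_u,p_v)=p_u$ on the common ball and $\int\min(p_u,p_v)\ge\ell(u)-o(1)=\Omega(1)$; thus $d_{tv}(P_u,P_v)\le1-\Omega(1)$. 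Note this uses no bound on $f(u)/f(v)$ at all --- the lower bound on $\ell$ does everything --- which is why $\norm{u-v}$ only has to be $O(r)$ rather than $O(1/n)$. Feeding this into the standard Lov\'{a}sz--Simonovits conductance argument together with Theorem~\ref{thm:isoperimetric}: for a partition $K=S_1\cup S_2\cup S_3$ with $S_3$ the points whose one-step distribution leaks an $\Omega(1)$ fraction across the partition, the comparison forces $d(S_1,S_2)\ge r/2$, whence $\pi_f(S_3)\ge\tfrac{r}{2D}\min\{\pi_f(S_1),\pi_f(S_2)\}$ and the conductance is $\Omega(r/D)=\Omega(\eps/(\sqrt n D))$, with $D$ the diameter of the support.

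\textbf{Assembly and the main obstacle.} To obtain a finite $D$ I would pass to the intersection of $K$ with a ball of radius $D=\mathrm{poly}(n,H/\eps)$ about the point of $K$ nearest the origin, chosen so the discarded tail has $\pi_f$-measure $\le\eps/(2H)$, and absorb the shell within $r$ of $\partial K$ by the usual device of slightly contracting $K$ (or by working with the $s$-conductance, $s=\eps/(2H)$). The $s$-conductance is then $\Omega(\eps/(\sqrt n D))$, and the standard consequence --- from an $H$-warm start, $d_{tv}(\sigma_m,\pi_f)\le\eps$ once $m=O(\phi_s^{-2}\log(2H/\eps))$ --- gives $m=O(nD^2\eps^{-2}\log(2H/\eps))$, which is $O(n^3H^4\eps^{-4}\log(2H/\eps))$ with the appropriate polynomial $D$. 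I expect the tail estimate to be the one genuinely delicate step: comparing the $f$-mass of the far part of $K$ directly to the $f$-mass near the mode loses a spurious $2^{\Theta(n)}$, and removing it needs the observation that, by convexity of $K$ and the exponentially shrinking cross-sections of any thin outward part of $K$, the body can only carry non-negligible $f$-mass far from the origin if it is fat there --- in which case $\pi_f$ inherits a constant fraction of the unrestricted Cauchy, for which the clean tail bound $\Pr\{\norm{x}>\rho\}=O(\sqrt n/\rho)$ applies. Everything else is the standard isoperimetry-plus-conductance machinery; the one place the Cauchy density is really needed is the local-conductance estimate, through the flatness of $1+\sqnorm{x}$ at the scale $r=\eps/(8\sqrt n)$.
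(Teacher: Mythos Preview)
Your outline matches the paper's---reduce to the standard Cauchy, invoke $1/(n-1)$-harmonic concavity for isoperimetry, and replace the generic $(\alpha,\delta)$ smoothness by a direct local estimate---but the local-conductance argument has a real gap, and the one-step comparison is off by a factor of $\sqrt n$.

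\textbf{The half-ball step fails.} In your case $\norm{x}\ge Cr$ you claim the set $\{y\in\ball{x}{r}:\norm{y}\le\norm{x}\}$ is, up to relative error $O(1/C)$, a half-ball. In high dimension the error is $O(\sqrt n/C)$, and for constant $C$ the set is in fact exponentially small. Writing $y=x+z$, the condition $\norm{y}\le\norm{x}$ becomes $\hat x\cdot z\le -\norm{z}^2/(2\norm{x})$. A uniform $z\in\ball{}{r}$ has $\norm{z}$ concentrated near $r$ while $\hat x\cdot z$ is concentrated at scale $r/\sqrt n$; for $\norm{z}\ge r/2$ the condition forces $\hat x\cdot z\le -r/(8C)$, which by Lemma~\ref{lem:sphereHalfSpace} cuts out only an $e^{-\Omega(n/C^2)}$ fraction of $\ball{}{r}$. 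Pushing the threshold to $\norm{x}\ge Cr\sqrt n$ repairs this case but breaks the other: already at $\norm{x}=\Theta(r\sqrt n)$ the radial point $y=x+r\hat x$ gives $f(y)/f(x)=e^{-\Theta(\eps^2\sqrt n)}$, so a uniform lower bound on $f(y)/f(x)$ over all of $\ball{x}{r}$ is impossible.

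The paper's fix (Proposition~\ref{prop:smooth}) is to drop the half-ball and use the equatorial slab $C_x=\{y\in\ball{x}{r}:\abs{x\cdot(x-y)}\le 4r\norm{x}/\sqrt n\}$. This slab has volume at least $(1-e^{-4})\vol{\ball{}{r}}$ (it is where the mass of $\ball{x}{r}$ concentrates), and on it $\norm{y}^2$ differs from $\norm{x}^2$ by at most $r^2+8r\norm{x}/\sqrt n$, whence $f(y)/f(x)\ge 1-O(r\sqrt n)=1-O(\eps)$ \emph{uniformly in $\norm{x}$}; the key algebraic step is $\norm{x}/(1+\norm{x}^2)\le 1/2$. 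One slab estimate replaces both of your cases.

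\textbf{The coupling distance.} Your claim that $\ball{u}{r}$ and $\ball{v}{r}$ differ by an exponentially small set when $\norm{u-v}\le r/2$ is wrong for the same concentration reason: the lune $\ball{v}{r}\setminus\ball{u}{r}$ has relative volume $\Theta(\norm{u-v}\sqrt n/r)$, so one needs $\norm{u-v}=O(r/\sqrt n)$, exactly the hypothesis of Lemma~\ref{lem:gap}. The conductance is therefore $\Omega\!\left(r/(\sqrt n\,D)\right)$, not $\Omega(r/D)$; combined with the explicit tail bound $D=O(\sqrt n\,H/\eps)$ from Proposition~\ref{prop:large-ball} and $r=\eps/(8\sqrt n)$, this is what produces the stated mixing time.
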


The proof of this theorem departs from earlier counterparts in a significant way. In addition to isoperimetry, and the closeness of one-step distributions of nearby points, we have to prove that most of the measure is contained in a ball of not-too-large radius. For logconcave densities, this large-ball probability decays exponentially with the radius. For the Cauchy density it only decays linearly (Proposition \ref{prop:large-ball}).

\section{Preliminaries}

Let $\ball{x}{r}$ denote a ball of radius $r$ around point $x$.
One step of the ball walk at a point $x$ defines a probability distribution $P_{x}$ over $\reals^{n}$ as follows.
\[
P_{x}(S) = \int_{S \cap \ball{x}{r}} \min\left\{1, \frac{f(y)}{f(x)}\right\} dy.
\]
For every measurable set $S \subseteq \reals^{n}$ the ergodic flow from $S$ is defined as
\[
\Phi(S) = \int _{S} P_{x}(\reals^{n} \setminus S) {f}(x) dx,
\]
and the measure of $S$ according to $\pi_f$ is defined as $\pi _f(S) = \int _{S} f(x)dx/\int _{\reals ^n}f(x)dx$. The $s$-conductance $\phi_s$ of the Markov chain defined by ball walk is
\[
\phi_{s} = \inf_{s \leq \pi_{f}(S) \leq 1/2} \frac{\Phi(S)}{\pi_{f}(S) - s}.
\]
To compare two distributions $Q_1$, $Q_2$ we use the total variation distance between $Q_1$ and $Q_2$, defined by $d_{tv}(Q_1,Q_2)=\sup_{A} |Q_1(A)-Q_2(A)|$. When we refer to the distance between two sets, we mean the minimum distance between any two points in the two sets. That is, for any two subsets $S_1, S_2\subseteq \reals^n$, $d(S_1,S_2):=\min\{|u-v|:u\in S_1, v\in S_2\}$.
Next we quote a lemma from \cite{LS93} which relates the $s$-conductance to the mixing time.
\begin{lemma} \label{lemma:mix}
Let $0 < s \leq 1/2$ and $H_{s} = \sup_{\pi_{f}(S) \leq s} \abs{\sigma_{0}(S) - \pi_{f}(S)}$. Then for every measurable $S \subseteq \reals^{n}$ and every $m \geq 0$,
\[
\abs{\sigma_{m}(S) - \pi_{f}(S)} \leq H_{s} + \frac{H_{s}}{s} \left(1 - \frac{\phi_{s}^{2}}{2}\right)^{m}.
\]
\end{lemma}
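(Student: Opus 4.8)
The plan is to reconstruct the argument of Lov\'asz and Simonovits that underlies this quoted lemma, which tracks the evolution of a concave ``profile function'' of $\sigma_m$ measured against $\pi_f$. For $m\ge 0$ and $x\in[0,1]$ set
\[
g_m(x)=\sup\bigl\{\sigma_m(A)\;:\;A\subseteq\reals^n\text{ measurable},\ \pi_f(A)=x\bigr\}.
\]
Since $\pi_f$ is absolutely continuous (hence nonatomic) and $\sigma_0$ — so every $\sigma_m$ — is absolutely continuous with respect to $\pi_f$, a routine splitting argument shows that $g_m$ is concave and nondecreasing on $[0,1]$, with $g_m(0)=0$, $g_m(1)=1$, and therefore $g_m(x)\ge x$, and moreover
\[
d_{tv}(\sigma_m,\pi_f)=\sup_A\abs{\sigma_m(A)-\pi_f(A)}=\max_{x\in[0,1]}\bigl(g_m(x)-x\bigr),
\]
the under-counting case $\sigma_m(A)<\pi_f(A)$ being recovered by passing to $\reals^n\setminus A$. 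So it suffices to bound $g_m(x)-x$ uniformly in $x$; concretely I will prove $g_m(x)-x\le h_m(x)$ for all $x$ and all $m$, where $h_m$ is the concave ``potential''
\[
h_m(x)=H_s+\frac{H_s}{s}\Bigl(1-\frac{\phi_s^2}{2}\Bigr)^{m}\varphi(x),\qquad \varphi(x):=\sqrt{\bigl(\min\{x,1-x\}-s\bigr)^{+}},
\]
and then, since $\varphi\le 1$, the lemma follows.

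The crux is a one-step ``conductance-smoothing'' inequality: for every $m$ and every $x\in[0,1]$,
\[
g_{m+1}(x)\le\tfrac12\,g_m\bigl(x-u(x)\bigr)+\tfrac12\,g_m\bigl(x+u(x)\bigr),\qquad u(x):=2\phi_s\bigl(\min\{x,1-x\}-s\bigr)^{+},
\]
with $g_m$ extended by $0$ to the left of $0$ and by $1$ to the right of $1$; note $u\equiv 0$ on $[0,s]\cup[1-s,1]$, where this merely says $g_{m+1}\le g_m$. To prove it, fix a near-maximizer $A$ with $\pi_f(A)=x$ and write $\sigma_{m+1}(A)=\int(P\mathbf{1}_A)\,d\sigma_m$, where $\psi:=P\mathbf{1}_A$ is the function $y\mapsto P_y(A)$ (including the self-loop mass), so that $0\le\psi\le 1$ and, by stationarity, $\int\psi\,d\pi_f=\pi_f(A)=x$. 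Three ingredients constrain $\psi$: reversibility of the Metropolis-filtered ball walk with respect to $\pi_f$ (detailed balance $f(x)P_x(dy)=\min\{f(x),f(y)\}\,\mathbf{1}[y\in\ball{x}{r}]\,dy/\vol{\ball{x}{r}}=f(y)P_y(dx)$), which equates the ergodic flow out of $A$ with the flow into $A$; laziness (holding probability at least $\tfrac12$), which forces $\psi\ge\tfrac12$ on $A$ and $\psi\le\tfrac12$ on $\reals^n\setminus A$; and the $s$-conductance bound $\Phi(A)\ge\phi_s\,(\pi_f(A)-s)$, which then shows $\psi$ differs from $\mathbf{1}_A$ in $\pi_f$-mass at least $u(x)$ on each side of $A$. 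Rounding $\psi$ to a half-sum $\tfrac12(\mathbf{1}_{B_1}+\mathbf{1}_{B_2})$ of indicators of two sets of $\pi_f$-measures $x-u(x)$ and $x+u(x)$ — the standard rounding argument of \cite{LS93} — and invoking that $g_m$ is the concave, nondecreasing profile of $\sigma_m$, yields the displayed bound. Executing this rounding and extracting exactly the factor $\phi_s^2/2$ from the flow estimate is the technical heart of the lemma, and the step I expect to be the main obstacle.

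Granting the smoothing inequality, the conclusion follows by induction on $m$. Base case $m=0$: on $[0,s]\cup[1-s,1]$, $g_0(x)-x\le H_s$ is exactly the definition of $H_s$ (for the right interval, apply the definition to $\reals^n\setminus A$); for $x\in(s,1-s)$, concavity of $g_0$ with $g_0(0)=0$ gives $g_0(x)\le\tfrac{x}{s}\,g_0(s)\le x+\tfrac{x}{s}H_s$, and with the symmetric estimate from $g_0(1)=1$ we obtain $g_0(x)-x\le\tfrac{H_s}{s}\min\{x,1-x\}$, which is $\le H_s+\tfrac{H_s}{s}\varphi(x)=h_0(x)$ because $\min\{x,1-x\}-s\le\varphi(x)$ (indeed $\tau\le\sqrt{\tau}$ for $\tau\in[0,1]$, and $\min\{x,1-x\}-s\le\tfrac12$). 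Inductive step: on $[0,s]\cup[1-s,1]$ the smoothing inequality gives $g_{m+1}(x)\le g_m(x)$, hence $g_{m+1}(x)-x\le g_0(x)-x\le H_s=h_{m+1}(x)$; on $(s,1-s)$, feed $g_m\le\mathrm{id}+h_m$ into the smoothing inequality, cancel the linear part, and use the identities $(x-u(x))-s=(x-s)(1-2\phi_s)$ and $(x+u(x))-s=(x-s)(1+2\phi_s)$ (for $x\le\tfrac12$, and symmetrically for $x\ge\tfrac12$, in the non-degenerate regime $\phi_s\le\tfrac12$; larger $\phi_s$ forces essentially immediate mixing and is handled directly) together with concavity of $\varphi$ to compute
\[
\tfrac12\bigl(\varphi(x-u)+\varphi(x+u)\bigr)\le\sqrt{x-s}\cdot\tfrac12\bigl(\sqrt{1-2\phi_s}+\sqrt{1+2\phi_s}\bigr)\le\bigl(1-\tfrac{\phi_s^2}{2}\bigr)\sqrt{x-s}=\bigl(1-\tfrac{\phi_s^2}{2}\bigr)\varphi(x),
\]
the middle step being the elementary inequality $\tfrac12(\sqrt{1-\theta}+\sqrt{1+\theta})\le 1-\theta^2/8$ at $\theta=2\phi_s$, and concavity of $\varphi$ disposing of the cases where $x\pm u(x)$ crosses $\tfrac12$ or leaves $[s,1-s]$ (each branch of $\varphi$ is itself an upper bound for $\varphi$). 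This gives $g_{m+1}(x)-x\le H_s+\tfrac{H_s}{s}(1-\tfrac{\phi_s^2}{2})^{m+1}\varphi(x)=h_{m+1}(x)$, closing the induction, and evaluating at any $x$ (using $\varphi\le 1$) gives the stated bound on $\abs{\sigma_m(S)-\pi_f(S)}$.
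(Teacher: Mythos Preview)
The paper does not prove this lemma at all; it is stated as a quotation from Lov\'asz and Simonovits \cite{LS93} and used as a black box, so there is no ``paper's own proof'' to compare against. Your proposal is precisely a reconstruction of the Lov\'asz--Simonovits argument that the paper is citing --- the concave profile function $g_m$, the one-step smoothing inequality driven by the ergodic flow, and the square-root potential $\varphi$ that turns the smoothing into the geometric factor $1-\phi_s^2/2$ --- and it is essentially correct as such; the only point worth flagging is that the smoothing step genuinely needs the chain to be lazy (holding probability $\ge\tfrac12$), which the ball walk as literally described in the paper is not, but this is a standard implicit assumption in this line of work rather than a defect in your reconstruction.
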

Finally, the following localization lemma \cite{LS93,KLS95} is a useful tool in the proofs of isoperimetric inequalities.
\begin{lemma} \label{lemma:local}
Let $g: \reals^{n} \rightarrow \reals$ and $h: \reals^{n} \rightarrow \reals$ be two lower semi-continuous integrable functions such that
\[
\int_{\reals^{n}} g(x) dx > 0 \quad \text{and} \quad \int_{\reals^{n}} h(x) dx > 0.
\]
Then there exist two points $a, b \in \reals^{n}$ and a linear function $l: [0, 1] \rightarrow \reals_{+}$ such that
\[
\int_{0}^{1} g((1-t)a + tb) l(t)^{n-1} dt > 0 \quad \text{and} \quad \int_{0}^{1} h((1-t)a + tb) l(t)^{n-1} dt > 0.
\]
\end{lemma}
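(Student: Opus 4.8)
The plan is to prove this by a divide-and-conquer (bisection) argument: starting from all of $\reals^n$, I would repeatedly cut with hyperplanes, always retaining a convex piece on which both $\int g>0$ and $\int h>0$ still hold, while forcing the piece to become thinner and thinner until in the limit it collapses onto a segment $[a,b]$. In this picture the factor $l(t)^{n-1}$ is exactly the limiting cross-sectional ``area'' of a thin neighbourhood of that segment: for a cone or frustum over a fixed $(n-1)$-dimensional base, the cross-section at parameter $t$ is the base scaled by a nonnegative linear function of $t$, so its $(n-1)$-dimensional volume is the $(n-1)$st power of a nonnegative linear function, which is where $l$ comes from.

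First I would localize to a bounded region: since $g$ and $h$ are integrable with positive integral, there is a cube (or ball) $C$ with $\int_C g>0$ and $\int_C h>0$, and it suffices to produce the needle inside $C$. The engine of the iteration is a single-cut lemma: given a convex body $Z$ with $\int_Z g>0$ and $\int_Z h>0$, there is a hyperplane that \emph{simultaneously bisects both} integrals, so that the two halves $Z^{+},Z^{-}$ each satisfy $\int_{Z^{\pm}} g=\tfrac12\int_Z g>0$ and $\int_{Z^{\pm}} h=\tfrac12\int_Z h>0$. This follows from a Borsuk--Ulam / intermediate-value argument: for each direction $u\in\sph^{n-1}$ the offset of a hyperplane perpendicular to $u$ can be chosen by continuity (using integrability) so as to bisect $\int_Z g$; the residual $\psi(u)=\int_{Z^{+}(u)}h-\tfrac12\int_Z h$ is then a continuous function on $\sph^{n-1}$ that is odd under $u\mapsto-u$, and hence vanishes for some $u$. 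Having such a cut, I keep whichever of the two (still ``good'') halves is thinner in a carefully chosen direction --- roughly a direction of large width transverse to the current longest chord of $Z$ --- and repeat. A compactness argument for nested convex bodies inside $C$, together with dominated convergence for the integrals and lower semi-continuity of $g$ and $h$, then lets me pass to the limit and read off $a$, $b$ and the weight $l(t)^{n-1}$.

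The step I expect to be the main obstacle is controlling the geometry of the shrinking bodies. A hyperplane that is forced to bisect both integrals may be positioned anywhere, so one has to argue that the cuts can be organized so that the kept pieces converge to a \emph{non-degenerate} segment rather than to a single point, and that along the way neither $\int g$ nor $\int h$ is ever lost; one also has to check that the limiting cross-sections are genuinely frustum-like, so that their volume really is a power of a linear function and not something more complicated. Handling this degeneration analysis carefully --- including the borderline case in which the body does collapse to a point, which must be ruled out or absorbed by allowing $l$ to vanish at an endpoint --- is the technical heart of the proof; by comparison the bisection step and the cross-section computation are routine once the framework is in place.
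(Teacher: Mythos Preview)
The paper does not prove this lemma at all: it is quoted as a known tool and attributed to \cite{LS93,KLS95}, so there is no ``paper's own proof'' to compare against. Your outline is essentially the Lov\'asz--Simonovits bisection proof from those references: repeatedly split by a hyperplane that simultaneously halves $\int g$ and $\int h$ (ham-sandwich / Borsuk--Ulam), keep one half, and pass to a limiting needle whose transverse cross-section contributes the factor $l(t)^{n-1}$.

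One point worth tightening: in the standard argument the limiting object is not obtained by literally taking a geometric limit of the nested bodies and then invoking lower semi-continuity of $g,h$ on a degenerate set. Rather, one works from the start with integrals against ``needles'' $(a,b,l)$ and uses compactness of the parameter space of needles inside the bounded region, together with the fact that the two strict inequalities are open conditions; lower semi-continuity is used to ensure the needle integrals are upper semi-continuous in $(a,b,l)$ so that a limit of good needles is still good. Your sketch conflates this with a set-theoretic limit of convex bodies, which makes the ``why is the limit a nondegenerate segment and why does $l$ come out linear'' step look harder than it is. If you rephrase the iteration as producing a sequence of needles (pick, inside each bisected piece, a truncated cone approximating it) and then extract a convergent subsequence in $(a,b,l)$-space, the degeneration worry largely disappears and the role of lower semi-continuity becomes transparent.
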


\section{Isoperimetry}

Here we prove an isoperimetric inequality for functions satisfying a certain unimodality criterion. We further show that $(1/(n-1))$-harmonic-concave functions satisfy this unimodality criterion and hence have good isoperimetry.

We begin with a simple lemma that will be used in the proof of the isoperimetric inequality.
\begin{lemma} \label{lemma:interval}
Let $p: [0, 1] \rightarrow \reals_{+}$ be a unimodal function, and let $0 \leq \alpha < \beta \leq 1$. Then
\[
\int_{\alpha}^{\beta} p(t) dt \geq \abs{\alpha - \beta} \min\left\{\int_{0}^{\alpha} p(t) dt, \int_{\beta}^{1} p(t) dt\right\}.
\]
\end{lemma}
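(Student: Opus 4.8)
The plan is to reduce the claimed inequality to a routine estimate about a unimodal function by splitting into cases according to where the mode of $p$ sits relative to the interval $[\alpha,\beta]$. Write $M = \int_0^\alpha p$, $I = \int_\alpha^\beta p$, $N = \int_\beta^1 p$, and let $x^*$ be a mode of $p$ (a point at which $p$ is maximal; by unimodality $p$ is nondecreasing on $[0,x^*]$ and nonincreasing on $[x^*,1]$). We want $I \ge (\beta-\alpha)\min\{M,N\}$.

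The first and easiest case is $x^* \in [\alpha,\beta]$. Then on $[0,\alpha]$ we have $p(t) \le p(\alpha) \le p(s)$ for every $s\in[\alpha,x^*]$, so $M = \int_0^\alpha p(t)\,dt \le \alpha\, p(\alpha) \le \alpha \cdot \frac{1}{\beta-\alpha}\int_\alpha^\beta p \le \frac{\alpha}{\beta-\alpha} I$ — wait, more simply: $p(\alpha)\le p(t)$ for all $t\in[\alpha,\beta]$ only fails once we pass $x^*$; instead I would use $M \le \alpha\,p(\alpha)$ and $p(\alpha) \le \frac{1}{\beta-\alpha}\int_\alpha^\beta p(t)\,dt$ is false in general, so the clean route is: since $p$ is nondecreasing on $[0,\alpha]$, $M \le \alpha\, p(\alpha)$, and since $p\ge p(\alpha)$ on $[\alpha,x^*]$ and $p\ge 0$ elsewhere, $I \ge (x^*-\alpha)p(\alpha)$; symmetrically $N\le(1-\beta)p(\beta)$ and $I\ge(\beta-x^*)p(\beta)$. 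Combining, $I \ge \max\{(x^*-\alpha)p(\alpha),(\beta-x^*)p(\beta)\}$ while $\min\{M,N\}\le \min\{\alpha p(\alpha),(1-\beta)p(\beta)\}$, and a short case check on whether $x^*-\alpha \ge \alpha(\beta-\alpha)$ or $\beta - x^* \ge (1-\beta)(\beta-\alpha)$ (at least one must hold since their sum of "slacks" works out, using $\alpha+(1-\beta)\le 1$ and $(x^*-\alpha)+(\beta-x^*)=\beta-\alpha$) gives $I\ge(\beta-\alpha)\min\{M,N\}$.

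The second case is $x^* < \alpha$ (the case $x^* > \beta$ is symmetric, by the reflection $t\mapsto 1-t$). Then $p$ is nonincreasing on all of $[\alpha,1]$. Hence $I = \int_\alpha^\beta p(t)\,dt \ge (\beta-\alpha)\,p(\beta) $ and, since $p$ is nonincreasing on $[\beta,1]$, $N = \int_\beta^1 p(t)\,dt \le (1-\beta)\,p(\beta) \le p(\beta)$; therefore $I \ge (\beta-\alpha)p(\beta) \ge (\beta-\alpha) N \ge (\beta-\alpha)\min\{M,N\}$. This closes all cases.

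The only mild subtlety — and the one step I would be most careful about — is the combinatorial bookkeeping in the first case, ensuring that the two lower bounds on $I$ together dominate $(\beta-\alpha)\min\{M,N\}$ without an extra constant; the inequality $\alpha + (1-\beta) \le 1 $ together with $(x^*-\alpha)+(\beta - x^*) = \beta-\alpha$ is exactly what makes the constant come out to $1$. Everything else is monotonicity of $p$ on the two sides of its mode applied to bound an integral over an interval by (length)$\times$(endpoint value) in the appropriate direction. No localization or concavity is needed here; unimodality alone suffices.
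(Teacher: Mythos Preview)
Your argument is correct and follows the same plan as the paper's: locate the mode relative to $[\alpha,\beta]$ and bound each integral by (interval length)$\times$(endpoint value) using the monotonicity of $p$ on either side of the mode. The paper collapses this into just two cases (splitting on whether $t_{\max}\le\alpha$); your separate treatment of $x^{*}\in[\alpha,\beta]$ is more careful, and it can in fact be shortened by observing
\[
I \;\ge\; (x^{*}-\alpha)\,p(\alpha) + (\beta-x^{*})\,p(\beta) \;\ge\; (\beta-\alpha)\min\{p(\alpha),p(\beta)\}
\]
together with $M\le\alpha\,p(\alpha)\le p(\alpha)$ and $N\le(1-\beta)\,p(\beta)\le p(\beta)$, which removes the combinatorial bookkeeping you flagged as the one subtle step.
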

\begin{proof} [Proof of Lemma \ref{lemma:interval}(Isoperimetry for 1-dimensional unimodal functions)].
Suppose the maximum of $p$ occurs at $t = t_{\max}$. If $t_{\max} \leq \alpha$ then
\begin{align*}
\int_{\alpha}^{\beta} p(t) dt & \geq p(\beta) \cdot \abs{\alpha - \beta} \geq \abs{\alpha - \beta} \cdot p(\beta) \cdot \abs{\beta - 1} \geq \abs{\alpha - \beta} \cdot \int_{\beta}^{1} p(t) dt
\end{align*}
Otherwise, if $t_{\max} > \alpha$ then
\begin{align*}
\int_{\alpha}^{\beta} p(t) dt & \geq p(\alpha) \cdot \abs{\alpha - \beta} \geq \abs{\alpha - \beta} \cdot p(\alpha) \cdot \abs{\alpha} \geq \abs{\alpha - \beta} \cdot \int_{0}^{\alpha} p(t) dt
\end{align*}
\end{proof}

Now we are ready to prove an isoperimetric inequality for functions satisfying a certain unimodality criterion.
\begin{theorem} \label{thm:iso-unimodal}
Let $f: \reals^{n} \rightarrow \reals_{+}$ be a function whose support has diameter $D$, and $f$ satisfies the following unimodality criterion: For any affine line $L \subseteq \reals^{n}$ and any linear function $l: K \cap L \rightarrow \reals_{+}$, $h(x) = f(x) l(x)^{n-1}$ is unimodal. Let $\reals^{n} = S_{1} \cup S_{2} \cup S_{3}$ be a partition of $\reals^{n}$ into three non-empty subsets. Then
\[
\pi_{f}(S_{3}) \geq \frac{d(S_{1}, S_{2})}{D} \min\left\{\pi_{f}(S_{1}), \pi_{f}(S_{2})\right\}.
\]
\end{theorem}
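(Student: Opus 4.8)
The plan is the standard localization‑lemma reduction. Set $\rho = d(S_1,S_2)/D$. If $\pi_f(S_1)=0$, $\pi_f(S_2)=0$, or $d(S_1,S_2)=0$ the right‑hand side is zero and there is nothing to prove, so I would assume all three are positive and suppose for contradiction that both $\pi_f(S_3)<\rho\,\pi_f(S_1)$ and $\pi_f(S_3)<\rho\,\pi_f(S_2)$. Clearing the normalizing constant $\int f$, this is exactly the assertion that the two functions
\[
g \;=\; f\cdot\Bigl(\mathbf{1}_{S_1}-\tfrac1\rho\,\mathbf{1}_{S_3}\Bigr),\qquad h \;=\; f\cdot\Bigl(\mathbf{1}_{S_2}-\tfrac1\rho\,\mathbf{1}_{S_3}\Bigr)
\]
have strictly positive integral over $\reals^n$. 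To meet the semicontinuity hypothesis of Lemma~\ref{lemma:local}, I would first shrink $S_1,S_2$ to open sets $U_1,U_2$ retaining all but an $\eps$‑fraction of their $\pi_f$‑mass and put $U_3=\reals^n\setminus(U_1\cup U_2)$; this preserves $d(U_1,U_2)\ge d(S_1,S_2)$, makes $g,h$ lower semicontinuous (taking $f$ continuous after a harmless approximation), and letting $\eps\to0$ at the end recovers the original inequality.

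Next I would apply Lemma~\ref{lemma:local} to $g$ and $h$, obtaining $a,b\in\reals^n$ and a linear $l\colon[0,1]\to\reals_+$ with $\int_0^1 g(\gamma(t))\,l(t)^{n-1}\,dt>0$ and $\int_0^1 h(\gamma(t))\,l(t)^{n-1}\,dt>0$, where $\gamma(t)=(1-t)a+tb$. Put $p(t)=f(\gamma(t))\,l(t)^{n-1}\ge0$ and $Z_i=\{t\in[0,1]:\gamma(t)\in S_i\}$. Since $p$ vanishes off the support $K$, I can truncate the segment to $\gamma^{-1}(K)$ and reparametrize back to $[0,1]$ — $l$ stays affine and nonnegative and both integrals stay positive — so that afterwards $\|b-a\|\le D$. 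The two integral inequalities then read
\[
\int_{Z_3}p\,dt<\rho\int_{Z_1}p\,dt\qquad\text{and}\qquad\int_{Z_3}p\,dt<\rho\int_{Z_2}p\,dt ,
\]
while the separation of $S_1$ from $S_2$ gives $|t-t'|\ge d(S_1,S_2)/\|b-a\|\ge\rho$ for all $t\in Z_1,\ t'\in Z_2$. By the unimodality hypothesis on $f$, the one‑dimensional function $p$ is unimodal on $[0,1]$.

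It now suffices to prove the one‑dimensional estimate $\int_{Z_3}p\ge\rho\min\{\int_{Z_1}p,\int_{Z_2}p\}$, which contradicts the two displayed inequalities. If $Z_1$ lies entirely to one side of $Z_2$, say $\sup Z_1\le\inf Z_2$, I would take $\alpha=\sup Z_1$, $\beta=\inf Z_2$; then $\beta-\alpha\ge\rho$ and $(\alpha,\beta)\subseteq Z_3$, so Lemma~\ref{lemma:interval} applied to the unimodal $p$ with this $\alpha,\beta$ gives $\int_{Z_3}p\ge\int_\alpha^\beta p\ge\rho\min\{\int_0^\alpha p,\int_\beta^1 p\}\ge\rho\min\{\int_{Z_1}p,\int_{Z_2}p\}$. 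In general $Z_1$ may interleave with $Z_2$, and this is the one place needing care: writing $\mu=\inf Z_2$, $\nu=\sup Z_2$, the separation forces $(\mu-\rho,\mu)$ and $(\nu,\nu+\rho)$ to meet neither $Z_1$ nor $Z_2$, hence to lie in $Z_3$; applying Lemma~\ref{lemma:interval} on each of these two disjoint intervals (whose outer tails dominate, respectively, $Z_1\cap[0,\mu)$, $Z_1\cap(\nu,1]$, and in both cases all of $Z_2$), summing, and checking the few cases according to whether the two parts of $Z_1$ outweigh $\int_{Z_2}p$, again yields $\int_{Z_3}p\ge\rho\min\{\int_{Z_1}p,\int_{Z_2}p\}$.

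I expect the heart of the argument to be essentially free: Lemma~\ref{lemma:local} reduces everything to one dimension, the unimodality criterion is tailored precisely so that $p=f\cdot l^{n-1}$ is unimodal there, and Lemma~\ref{lemma:interval} is then exactly the tool required. The main obstacle is that $Z_1$ and $Z_2$ need not be intervals or lie on opposite sides of the segment, so the final one‑dimensional estimate must be organized around the extreme points of one of the two sets rather than around a single separating point; the semicontinuity reduction needed to invoke Lemma~\ref{lemma:local} is routine but should be spelled out.
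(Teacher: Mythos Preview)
Your overall strategy is exactly that of the paper: argue by contradiction, apply the localization lemma (Lemma~\ref{lemma:local}) to the two functions $g,h$ built from the failed inequality, obtain a segment $[a,b]$ and an affine weight $l$, set $p(t)=f(\gamma(t))\,l(t)^{n-1}$, observe that the unimodality hypothesis makes $p$ unimodal, and finish with the one–dimensional interval estimate of Lemma~\ref{lemma:interval}. You are also more careful than the paper about the semicontinuity hypothesis and about truncating the segment to the support so that $\|b-a\|\le D$; the paper simply asserts these ``w.l.o.g.''

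The one place where your argument is incomplete is the interleaving case of the one–dimensional step. With $\mu=\inf Z_2$, $\nu=\sup Z_2$, your two buffer intervals $(\mu-\rho,\mu)$ and $(\nu,\nu+\rho)$ give, after Lemma~\ref{lemma:interval} and summing,
\[
\int_{Z_3}p\;\ge\;\rho\Bigl(\min\{A_L,B\}+\min\{A_R,B\}\Bigr),\qquad A_L=\int_{Z_1\cap[0,\mu)}p,\ \ A_R=\int_{Z_1\cap(\nu,1]}p,\ \ B=\int_{Z_2}p.
\]
Your case check ``whether the two parts of $Z_1$ outweigh $\int_{Z_2}p$'' handles only the situation $A_L+A_R=\int_{Z_1}p$, i.e.\ when $Z_1\cap(\mu,\nu)=\emptyset$. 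But nothing prevents $Z_1$ from having mass strictly between $\mu$ and $\nu$: the $\rho$–separation only keeps $Z_1$ out of a $\rho$–neighborhood of each point of $Z_2$, not out of the whole interval $[\mu,\nu]$. Concretely, patterns such as $Z_1,Z_2,Z_1,Z_2$ along $[0,1]$ (three or more ``switches'') leave a middle block of $Z_1$ inside $(\mu,\nu)$ that your two buffers never see; in the sub-case $A_L<B$, $A_R<B$, $A_L+A_R<\int_{Z_1}p$ your bound yields only $\rho(A_L+A_R)$, which need not dominate $\rho\min\{\int_{Z_1}p,\int_{Z_2}p\}$. A related technicality: in such interleavings one of $(\mu-\rho,\mu)$ or $(\nu,\nu+\rho)$ may fall outside $[0,1]$, so even the two buffers are not always available.

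The paper avoids picking just two intervals: it writes $Z_3=\bigcup_i(\alpha_i,\beta_i)$, applies Lemma~\ref{lemma:interval} to \emph{every} component, and sums. After arranging $a\in S_1$, $b\in S_2$ (so $0\in Z_1$, $1\in Z_2$), it then argues that some component $(\alpha_{i},\beta_{i})$ separates $Z_1$ from $Z_2$, which gives the bound from that single term. Your plan can be repaired along the same lines: rather than fixing two $\rho$–buffers around $\inf Z_2,\sup Z_2$, sum the Lemma~\ref{lemma:interval} estimate over all components of $Z_3$ (or, equivalently, induct on the number of colour changes along $[0,1]$), which is what is needed to handle arbitrarily deep interleaving of $Z_1$ and $Z_2$.
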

\begin{proof}
Suppose not. Define $g: \reals^{n} \rightarrow \reals$ and $h: \reals^{n} \rightarrow \reals$ as follows.
\begin{align*}
g(x) = \begin{cases} \dfrac{d(S_{1}, S_{2})}{D} f(x) & \text{if $x \in S_{1}$} \\ 0 & \text{if $x \in S_{2}$} \\ - f(x) & \text{if $x \in S_{3}$} \end{cases}
\quad \text{and} \quad
h(x) = \begin{cases} 0 & \text{if $x \in S_{1}$} \\ \dfrac{d(S_{1}, S_{2})}{D} f(x)
 & \text{if $x \in S_{2}$} \\ - f(x) & \text{if $x \in S_{3}$}.\end{cases}
\end{align*}
Thus
\[
\int_{\reals^{n}} g(x) dx > 0 \quad \text{and} \quad \int_{\reals^{n}} h(x) dx > 0,
\]
Lemma \ref{lemma:local} implies that there exist two points $a, b \in \reals^{n}$ and a linear function $l: [0,1] \rightarrow \reals_{+}$ such that
\begin{align}
\int_{0}^{1} g((1-t)a + tb) l(t)^{n-1} dt > 0 \quad \text{and} \quad \int_{0}^{1} h((1-t)a + tb) l(t)^{n-1} dt > 0. \label{eq:gh}
\end{align}
Moreover, w.l.o.g. we can assume that the points $a$ and $b$ are within the support of $f$, and hence $\norm{a-b} \leq D$. We may also assume that $a\in S_1$ and $b\in S_2$. Consider a partition of the interval $[0,1] = Z_{1} \cup Z_{2} \cup Z_{3}$, where
\[
Z_{i} = \left\{z \in [0,1] \suchthat (1-z)a + zb \in S_{i}\right\}.
\]
For $z_{1} \in Z_{1}$ and $z_{2} \in Z_{2}$, we have
\[
d(S_{1}, S_{2}) \leq d\left(\left(1-z_{1}\right)a + z_{1}b, \left(1-z_{2}\right)a + z_{2}b\right) \leq \abs{z_{1} - z_{2}} \cdot \norm{a - b} \leq \abs{z_{1} - z_{2}} D,
\]
and therefore $d(S_{1}, S_{2}) \leq d(Z_{1}, Z_{2}) D$. Now we can rewrite Equation \eqref{eq:gh} as
\begin{align*}
\int_{Z_{3}} f((1-t)a + tb) l(t)^{n-1} dt & < \frac{d(S_{1}, S_{2})}{D} \int_{Z_{1}} f((1-t)a + tb) l(t)^{n-1} dt \\
& \leq d(Z_{1}, Z_{2}) \int_{Z_{1}} f((1-t)a + tb) l(t)^{n-1} dt \\
\text{and similarly} \\
\int_{Z_{3}} f((1-t)a + tb) l(t)^{n-1} dt & \leq d(Z_{1}, Z_{2}) \int_{Z_{2}} f((1-t)a + tb) l(t)^{n-1} dt
\end{align*}
Define $p: [0,1] \rightarrow \reals_{+}$ as $p(t) = f((1-t)a + tb) l(t)^{n-1}$. From the unimodality assumption in our theorem, we know that $p$ is unimodal. Rewriting the above equations, we have
\begin{align}
\int_{Z_{3}} p(t) dt < d(Z_{1}, Z_{2}) \int_{Z_{1}} p(t) dt \quad \text{and} \quad \int_{Z_{3}} p(t) dt < d(Z_{1}, Z_{2}) \int_{Z_{2}} p(t) dt. \label{eq:interval}
\end{align}
Now suppose $Z_{3}$ is a union of disjoint intervals, i.e., $Z_{3} = \bigcup_{i} (\alpha_{i}, \beta_{i})$, $0\leq \alpha_{1} <\beta _{1} < \alpha _{2} < \beta _{2} <\cdot \cdot \cdot \leq 1$. By Lemma \ref{lemma:interval} we have
\[
\int_{\alpha_{i}}^{\beta_{i}} p(t) dt \geq \abs{\alpha_{i} - \beta_{i}} \cdot \min\left\{\int_{0}^{\alpha_{i}} p(t) dt, \int_{\beta_{i}}^{1} p(t) dt\right\}.\
\]
Therefore, adding these up we get
\begin{align*}
\int_{Z_{3}} p(t) dt & = \sum_{i} \int_{\alpha_{i}}^{\beta_{i}} p(t) dt \\
& \geq \abs{\alpha_{i} - \beta_{i}} \cdot \sum_{i} \min\left\{\int_{0}^{\alpha_{i}} p(t) dt, \int_{\beta_{i}}^{1} p(t) dt\right\} \\
& \geq d(Z_{1}, Z_{2}) \cdot \min\left\{\int_{Z_{1}} p(t) dt, \int_{Z_{2}} p(t) dt\right\}.
\end{align*}
since there must be some $i$ such that $Z_{1}$ and $Z_{2}$ are separated by the interval $(\alpha_{i}, \beta_{i})$. But then we get a contradiction to Equation \eqref{eq:interval}. This completes the proof of Theorem \ref{thm:iso-unimodal}.\\
\end{proof}

\subsection{Isoperimetry of (1/(n-1))-Harmonic-concave functions}
We show that $(1/(n-1))$-harmonic-concave functions satisfy the unimodality criterion used in the proof of Theorem \ref{thm:iso-unimodal}. Therefore, as a corollary, we get an isoperimetric inequality for $(1/(n-1))$-harmonic-concave functions, which is a subclass of harmonic-concave functions.

\begin{prop} \label{prop:harmodal}
Let $f: \reals^{n} \rightarrow \reals_{+}$ be a smooth $(1/(n-1))$-harmonic-concave function and $l: [0, 1] \rightarrow \reals_{+}$ be a linear function. Now let $a, b \in \reals^{n}$ and define $h: [0, 1] \rightarrow \reals_{+}$ as $h(t) = f((1-t)a + tb) l(t)^{n-1}$. Then $h$ is a unimodal function.
\end{prop}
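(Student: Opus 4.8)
The plan is to reduce the multivariate unimodality claim to a one-dimensional statement and then analyze the sign pattern of $h'(t)$. First I would restrict $f$ to the segment $x(t) = (1-t)a + tb$ and set $g(t) = f(x(t))$; since $f$ is $(1/(n-1))$-harmonic-concave, $g$ satisfies the one-dimensional inequality $g(\lambda u + (1-\lambda)v) \ge \left(\lambda g(u)^{-1/(n-1)} + (1-\lambda) g(v)^{-1/(n-1)}\right)^{-(n-1)}$, i.e. $g^{-1/(n-1)}$ is convex on the interval where $g > 0$. Equivalently, writing $\phi = g^{-1/(n-1)}$, we have $\phi$ convex and positive, and $h(t) = \phi(t)^{-(n-1)} l(t)^{n-1} = \left(l(t)/\phi(t)\right)^{n-1}$. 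So it suffices to show that $l/\phi$ is unimodal on $\{t : \phi(t) < \infty\}$ (an interval), where $l$ is affine and nonnegative and $\phi$ is convex and positive; raising a unimodal nonnegative function to the power $n-1 > 0$ preserves unimodality.

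The core step is therefore: if $l$ is affine with $l \ge 0$ and $\phi > 0$ is convex on an interval $I$, then $\psi := l/\phi$ is unimodal on $I$. I would prove this by examining $\psi'$. Where $\phi$ is twice differentiable, $\psi' = (l'\phi - l\phi')/\phi^2$, so the sign of $\psi'$ is governed by $q(t) := l'(t)\phi(t) - l(t)\phi'(t) = l'\phi - l\phi'$. Differentiating, $q'(t) = l'\phi' - l'\phi' - l\phi'' = -l(t)\phi''(t) \le 0$ since $l \ge 0$ and $\phi$ is convex (so $\phi'' \ge 0$). Hence $q$ is nonincreasing, which means it changes sign at most once, from $+$ to $-$; consequently $\psi' $ is first nonnegative then nonpositive, so $\psi$ is unimodal. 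The smoothness hypothesis in the proposition lets me run this computation cleanly; for the general convex $\phi$ one replaces derivatives by one-sided derivatives and uses that $\phi'$ is nondecreasing, but the monotone-sign-change structure is identical, and in any case the proposition as stated assumes $f$ smooth.

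I should also handle the boundary behavior: on the segment, $h(t) = f(x(t)) l(t)^{n-1}$ may be supported on a proper subinterval $[t_0, t_1] \subseteq [0,1]$ (where $x(t) \in K$), and outside it $h = 0$; a function that is unimodal on an interval and zero outside it is unimodal on $[0,1]$, so this causes no trouble. The one genuine point requiring care — and the step I expect to be the main obstacle — is justifying that the one-dimensional restriction $g = f \circ x$ is itself $(1/(n-1))$-harmonic-concave, i.e. that the convex-combination inequality along the line $ab$ follows from the hypothesis on $f$; this is essentially immediate from the definition since $(1-\lambda)x(u) + \lambda x(v) = x((1-\lambda)u + \lambda v)$, but one must also confirm that $\phi = g^{-1/(n-1)}$ being "convex" is the correct reformulation (checking the direction of the inequality and the exponent sign, since $-1/(n-1) < 0$ flips things). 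Once that reformulation is pinned down, the argument via $q' = -l\phi'' \le 0$ finishes it.
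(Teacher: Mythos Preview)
Your proposal is correct and follows essentially the same route as the paper: both rewrite $h=(l/\phi)^{n-1}$ with $\phi=f^{-1/(n-1)}$ convex along the segment, and then argue that the ratio of an affine nonnegative function to a positive convex function is unimodal. The only cosmetic difference is in the finishing step---the paper normalizes $l(t)=t$ and checks that the second derivative of $t/\phi(t)$ is nonpositive at any critical point, whereas you observe directly that the numerator $q=l'\phi-l\phi'$ of $\psi'$ satisfies $q'=-l\phi''\le 0$ and hence changes sign at most once; your version is if anything slightly cleaner.
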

\begin{proof} [Proof of Proposition \ref{prop:harmodal} (Unimodality of $1/(n-1)$-harmonic-concave functions)]
Define $g(x) = \left(1/f(x)\right)^{1/(n-1)}$. Since $f$ is $(1/(n-1))$-harmonic-concave, $g$ is convex and we can rewrite $h$ as
\[
h(t) = f((1-t)a + tb) l(t)^{n-1} = \left(\frac{l(t)}{g((1-t)a + tb)}\right)^{n-1} = \left(\frac{l(t)}{q(t)}\right)^{n-1},
\]
where $q(t) = g((1-t)a + tb)$ which is also convex. Also w.l.o.g. we can do a linear transformation that transforms $l$ into the identity function $l(t) = t$ without affecting the convexity of $q$. Thus, it suffices to show that if $h(t) = \left(t/q(t)\right)^{1/(n-1)}$, where $q$ is a convex function, then $h$ is unimodal. Indeed,
\begin{align*}
\frac{d}{dt} h(t)^{n-1} & = \frac{q(t) - t \dfrac{d}{dt}q\left(t\right)}{q(t)^{2}} \\
\frac{d^{2}}{dt^{2}} h(t)^{n-1} & = \frac{q(t)\left(2t \left(\dfrac{d}{dt}q\left(t\right)\right)^{2} - 2 q(t) \dfrac{d}{dt}q\left(t\right) - t q(t) \dfrac{d^{2}}{dt^{2}}q\left(t\right)\right)}{q(t)^{4}}.
\end{align*}
If there exists a local optimum for $h^{n-1}$ at $t = t_{0}$ then
\[
\frac{d}{dt} h(t_{0})^{n-1} = 0 \Rightarrow q(t_{0}) = t_{0} \dfrac{d}{dt}q\left(t_{0}\right) \Rightarrow \frac{d^{2}}{dt^{2}} h(t_{0})^{n-1} = \frac{- t_{0} q(t_{0})^{2}\dfrac{d^{2}}{dt^{2}}q\left(t_{0}\right)}{q(t_{0})^{4}} \leq 0,
\]
because $t_{0} \in (0, 1)$, and $\dfrac{d^{2}}{dt^{2}} q\left(t\right)\geq 0$ as $q$ is convex. This implies that every local optimum is a local maximum and hence there are no convex regions in $h(t)^{n-1}$. Hence, there are no convex regions in $h(t)$ which implies that $h(t)$ is unimodal.\\
\end{proof}

We get Theorem \ref{thm:isoperimetric} as a corollary of Theorem \ref{thm:iso-unimodal} and Proposition \ref{prop:harmodal}.
\subsection{Lower bound for isoperimetry}
In this section, we show that $(1/(n-1))$-harmonic-concave functions are the limit of isoperimetry by showing a $(1/(n-1-\eps))$-harmonic concave function with poor isoperimetry for $0<\eps\leq 1$.

\begin{proof} [Proof of Theorem \ref{thm:limitIsoperimetry}]
The proof is based on the following construction.
Consider $K \subseteq \reals^{n}$ defined as follows.
\[
K = \left\{x \suchthat 0 \leq x_{1} < \frac{1}{1+\delta}~ \text{and}~ x_{2}^{2} + x_{3}^{2} + \dotsc + x_{n}^{2} \leq (1-x_{1})^{2}\right\},
\]
where $\delta > 0$. $K$ is a parallel section of a cone symmetric around the $X_{1}$-axis and is therefore convex. Now we define a function $f: \reals^{n} \rightarrow \reals_{+}$ whose support is $K$.
\[
f(x) = \begin{cases} \dfrac{C}{\left(1 - (1+\delta)x_{1}\right)^{n-1-\eps}} & \quad \text{if $x \in K$,} \\ 0 & \quad \text{if $x \notin K$}, \end{cases}
\]
where $C$ is the appropriate constant so as to make $\pi_{f}(K) = 1$. By definition, $f$ is a $1/(n-1-\eps)$-harmonic concave function.

Define a partition $\reals^{n} = S \cup T$ as $S = \{x \in K \suchthat 0 \leq x_{1} \leq t\}$ and $T = \reals^{n} \setminus S$. The theorem holds for a suitable choice of $t$.
[Proof of Theorem \ref{thm:limitIsoperimetry} (Limit of isoperimetry)]

We want to show that
\[
\frac{\pi_{f} (\partial S)}{\min\{\pi_{f}(S), \pi_{f}(T)\}} = O(c^{-n}),
\]
for some constant $c > 1$, which means that $f$ does not satisfy the isoperimetric inequality. (By abuse of notation, we use $\pi_{f}(\partial S)$ for the area measure defined by $f$ on the boundary of $S$.) In order to achieve this, it seems better for us to choose a value of $t$ that minimizes $\pi_{f}(\partial S)$.
\[
\pi_{f}(\partial S) = V_{n-1} C (1-t)^{n-1} \left(1-(1+\delta)t\right)^{1+\eps-n},
\]
where $V_{n}$ denotes the volume of a unit ball in $\reals^{n}$. Simple calculus shows that $\pi_{f}(\partial S)$ is convex as a function of $t$ over $[0, 1/(1+\delta)]$ and attains the minimum when
\[
t = t_{\min} = \frac{1}{1+\delta} - \frac{n-1-\eps}{\eps} \cdot \frac{\delta}{1+\delta}.
\]
Moreover, $\pi_{f}(\partial S)$ is decreasing for $0 \leq t \leq t_{\min}$ and increasing for $t_{\min} \leq t \leq 1/(1+\delta)$. Thus, using $t = t_{\min}$ to define the partition $\reals^{n} = S \cup T$, we get
\begin{align*}
\pi_{f}(\partial S) & = V_{n-1} C (1-t_{\min})^{n-1} \left(1-(1+\delta)t_{\min}\right)^{1+\eps-n} \\
& = V_{n-1} C \left(\frac{\delta}{1+\delta}\right)^{n-1} \left(\frac{n-1-\eps}{\eps}\right)^{n-1} \left(\frac{\delta (n-1-\eps)}{\eps}\right)^{1+\eps-n}.
\end{align*}
and
\begin{align*}
\pi_{f}(S) & = \int_{S} f(x) dx \\
& = V_{n-1} C \int_{0}^{t_{\min}} (1-x_{1})^{n-1} \left(1-(1+\delta)x_{1}\right)^{1+\eps-n} dx_{1} \\
& \geq V_{n-1} C \int_{0}^{t'} (1-x_{1})^{n-1} \left(1-(1+\delta)x_{1}\right)^{1+\eps-n} dx_{1} \\
& \qquad \qquad \text{where $0 < t' = \frac{1}{2(1+\delta)} < t_{\min}$} \\
& \geq V_{n-1} C (1-t')^{n-1} \left(1-(1+\delta)t'\right)^{1+\eps-n} t' \\
& = V_{n-1} C \left(\frac{1}{2} + \frac{\delta}{2(1+\delta)}\right)^{n-1} \left(\frac{1}{2}\right)^{1+\eps-n} \frac{1}{2(1+\delta)} \\
& \geq V_{n-1} C \frac{1}{2^{\eps}} \cdot \frac{1}{2(1+\delta)}.
\end{align*}
and
\begin{align*}
\pi_{f}(T) & = \int_{T} f(x) dx \\
& = V_{n-1} C \int_{t_{\min}}^{1/(1+\delta)} (1-x_{1})^{n-1} \left(1-(1+\delta)x_{1}\right)^{1+\eps-n} dx_{1} \\
& \geq V_{n-1} C \int_{t''}^{1/(1+\delta)} (1-x_{1})^{n-1} \left(1-(1+\delta)x_{1}\right)^{1+\eps-n} dx_{1} \\
& \quad \quad \text{where $t_{\min} < t'' = \frac{1}{1+\delta} - \frac{1}{n^{2}} \cdot \frac{n-1-\eps}{\eps} \cdot \frac{\delta}{1+\delta} < \frac{1}{1+\delta}$} \\
& \geq V_{n-1} C (1-t'')^{n-1} \left(1-(1+\delta)t''\right)^{1+\eps-n} \left(\frac{1}{1+\delta} - t''\right) \\
& = V_{n-1} C \left(\frac{\delta}{1+\delta}\right)^{n-1} \left(1+\frac{n-1-\eps}{n^{2}\eps}\right)^{n-1} \left(\frac{\delta(n-1-\eps)}{n^{2}\eps}\right)^{1+\eps-n} \frac{\delta(n-1-\eps)}{n^{2}\eps(1+\delta)} \\
& \geq V_{n-1} C \left(\frac{\delta}{1+\delta}\right)^{n-1} \left(\frac{(n-1)(\eps(n+1)+1)}{n^2\eps} \right)^{n-1} \left(\frac{\delta(n-1-\eps)}{n^{2}\eps}\right)^{1+\eps-n} \frac{\delta(n-1-\eps)}{n^{2}\eps(1+\delta)}
\end{align*}
Therefore,
\begin{align*}
\frac{\pi_{f}(\partial S)}{\pi_{f}(S)} & \leq \left(\frac{\delta}{1+\delta}\right)^{n-1} \left(\frac{n-1-\eps}{\eps}\right)^{n-1} \left(\frac{\delta (n-1-\eps)}{\eps}\right)^{1+\eps-n} 2^{1+\eps} (1+\delta) \\
& \leq \delta^{\eps} \left(\frac{1}{1+\delta}\right)^{n-1} (n-1-\eps)^{\eps} \left(\frac{1}{\eps}\right)^{\eps} 2^{1+\eps} (1+\delta) \\
& \leq Cn\left((1+\eps)^{-\eps n}\right) &\quad \text{(Using $\delta = 1/({1+\eps})^{n}$)}
\end{align*}
for some constant $C>0$ and
\begin{align*}
\frac{\pi_{f}(\partial S)}{\pi_{f}(T)} & \leq \left(\frac{\eps n^{2(1+\eps)} (1+\delta)}{n-1-\delta}\right) \left(\frac{1}{\delta(\eps(n+1)+1)^{n-1}}\right)\\
&\leq Cn^6\left(\frac{1}{(1+\frac{\eps}{1+\eps}n)^n}\right)&\quad \text{(Using $\delta = 1/({1+\eps})^{n}$)}\\
&\leq C2^{-n}
\end{align*}
for some constant $C>0$.
Putting these together we have
\[
\frac{\pi_{f}(\partial S)}{\min\left\{\pi_{f}(S), \pi_{f}(T)\right\}} \leq Cn((1+\eps)^{-\eps n}),
\]
for some constant $C>0$.
\end{proof}

\section{Sampling s-Harmonic-concave functions}
Throughout this section, let $f:\reals^n\rightarrow \reals_+$ be an $s$-harmonic-concave function given by an oracle with parameters $(\alpha,\delta)$ such that $s\leq1/(n-1)$. Let $K$ be the convex set over which we want to sample points according to $f$. We also assume that $K$ contains a ball of radius $\delta$.
We state a technical lemma related to the parameters and the harmonic-concavity of the function.
\begin{lemma}\label{lem:parameters_hc}
Suppose $f:\reals^n\rightarrow \reals$ is a $s$-harmonic-concave function with parameters $(\alpha,\delta)$ as defined earlier. For any constant $c$ such that $1<c<\alpha$, if $\norm{x-z}\leq \frac{cs\delta}{\alpha^s-1}$, then $\frac{f(x)}{f(z)}\leq c$.
\end{lemma}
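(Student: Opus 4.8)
The plan is to convert $s$-harmonic-concavity into convexity of $g:=f^{-s}$ and then run a short argument that uses the $(\alpha,\delta)$-parameter bound exactly once, at a single auxiliary point placed just beyond the segment $[x,z]$. First I would reduce to the nontrivial case: if $f(x)\le f(z)$ then $f(x)/f(z)\le 1<c$, so assume $f(x)>f(z)$ (equivalently $g(x)<g(z)$). Write $d=\norm{x-z}$ and let $w$ be the point obtained by starting at $x$, passing through $z$, and continuing past $z$ until $\norm{w-z}=\delta$; then $x,z,w$ are collinear with $z$ between $x$ and $w$, and $z=\tfrac{\delta}{d+\delta}\,x+\tfrac{d}{d+\delta}\,w$.

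The core computation is then one line. Applying $s$-harmonic-concavity of $f$ to the triple $x,z,w$ gives
\[
f(z)^{-s}\;\le\;\frac{\delta}{d+\delta}\,f(x)^{-s}+\frac{d}{d+\delta}\,f(w)^{-s},
\]
and, since $\norm{w-z}=\delta$, the parameter bound gives $f(w)\ge f(z)/\alpha$, i.e.\ $f(w)^{-s}\le\alpha^{s}f(z)^{-s}$. Substituting and clearing denominators yields $\bigl(\delta-d(\alpha^{s}-1)\bigr)f(z)^{-s}\le\delta\,f(x)^{-s}$, and hence, once the left factor is positive,
\[
\frac{f(x)}{f(z)}\;\le\;\left(1-\frac{d(\alpha^{s}-1)}{\delta}\right)^{-1/s}.
\]
Feeding in the hypothesis $d\le\tfrac{cs\delta}{\alpha^{s}-1}$ makes the bracket at least $1-cs$ (so in particular positive, using $cs<1$, which holds since $c$ is a small constant and $s\le 1/(n-1)$), and it remains to verify the elementary inequality bounding $(1-cs)^{-1/s}$ by $c$. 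I would pin down this last constant carefully; the honest single-jump bound is cleanest when the distance hypothesis is phrased with $1-c^{-s}$ in place of $cs$, and chaining the one-step estimate over $N\to\infty$ equally spaced points gives the alternative clean form $f(x)/f(z)\le\exp\!\left(d(\alpha^{s}-1)/(s\delta)\right)$, which may be the version actually needed downstream.

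The step I expect to be the main obstacle is guaranteeing that the auxiliary point $w$ lies in the support of $f$, so that $f(w)>0$ and the $(\alpha,\delta)$-bound genuinely applies to the pair $(z,w)$. This extension past $z$ is unavoidable: convexity of $g$ on the segment $[x,z]$ by itself only bounds $g$ from below there, so without an external point one cannot improve on the trivial estimate $f(x)/f(z)\le\alpha$. When $f$ is everywhere positive — the Cauchy case we ultimately care about — there is nothing to check; in general one restricts to points $x,z$ sufficiently interior, or uses the standing assumption that $K$ contains a ball of radius $\delta$ to reposition the auxiliary point. Everything else is routine.
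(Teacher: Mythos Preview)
Your argument is essentially the paper's: introduce one auxiliary point on the line through $x$ and $z$, apply $s$-harmonic-concavity to that triple, and invoke the $(\alpha,\delta)$-bound once at the auxiliary point. The only variation is placement. The paper puts its auxiliary point $y$ at distance $\delta$ from $x$ (so $z=(1-t)x+ty$ with $t=\norm{x-z}/\delta$) and uses $f(x)/f(y)\le\alpha$, obtaining $f(x)/f(z)\le\bigl(1+t(\alpha^s-1)\bigr)^{1/s}$; you put $w$ at distance $\delta$ from $z$ and use $f(z)/f(w)\le\alpha$, obtaining $\bigl(1-d(\alpha^s-1)/\delta\bigr)^{-1/s}$. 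The paper's placement is marginally sharper (since $1+u\le(1-u)^{-1}$) and keeps the auxiliary point closer to the given segment, which slightly eases---but does not remove---the support concern you raise; the paper itself simply does not address support.

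Your worry about the final constant is justified, and it applies equally to the paper's proof: substituting $t\le cs/(\alpha^s-1)$ into the paper's own bound gives $(1+cs)^{1/s}$, which is \emph{not} bounded by $c$ (for small $s$ it is about $e^c$). The paper just writes ``we get the desired conclusion'' at that point. The clean threshold is $(c^{s}-1)\delta/(\alpha^{s}-1)$ in the paper's setup, or your $(1-c^{-s})\delta/(\alpha^{s}-1)$; since the lemma is only used downstream to feed a fixed constant into a coupling estimate, the slack is harmless in context, but your diagnosis of the last step is correct.
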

\begin{proof} [Proof of \ref{lem:parameters_hc} (Modifying parameters)]
Let $x,y\in \reals^n$ such that $\norm{x-y}\leq \delta$ and let $z=(1-t)x+ty$ for some $t\in (0,1)$. Then, by the $s$-harmonic-concavity, we have that
\begin{align*}
f(z)&\geq \left(\frac{1-t}{f(x)^s}+\frac{t}{f(y)^s}\right)^{-\frac{1}{s}}\\
&\geq \left(\frac{1-t}{f(x)^s}+\frac{t\alpha^s}{f(x)^s}\right)^{-\frac{1}{s}} \quad \quad \quad \quad \text{(Since $\norm{x-y}\leq \delta$ implies $f(x)/f(y)<\alpha$)}\\
&= \frac{f(x)}{(1+t(\alpha^s-1))^{\frac{1}{s}}}
\end{align*}
Since, $\norm{x-z}\leq \frac{cs\delta}{\alpha^s-1}$, we get the desired conclusion.
\end{proof}
Hence, for the $s$-harmonic-concave functions with parameters $(\alpha,\delta)$, the above lemma states that they also have parameters $(c,\frac{cs\delta}{(\alpha^{s}-1)})$ for any constant $c$ such that $1<c<\alpha$. In particular, if $\alpha>2$, this suggests that we may use $(2,\frac{2s\delta}{(\alpha^{s}-1)})$ as the parameters and if $\alpha\leq 2$, then we may use $(2,\delta)$ as the parameters. Thus, the parameters are $(2,\min\{\delta,\frac{2s\delta}{(\alpha^{s}-1)}\})$.

In order to sample, we need to show that $K$ contains points of good local conductance. For this, define
\[
K_{r} = \left\{x \in K \suchthat \frac{\vol{\ball{x}{r} \cap K}}{\vol{\ball{x}{r}}} \geq \frac{3}{4}\right\}.
\]
The idea is that, for appropriately chosen $r$, the log-lipschitz-like constraint will enforce that the points in $K_r$ have good local conductance. Further, we have that the measure in $K_r$ is close to the measure of $f$ in $K$ based on the radius $r$.
\begin{lemma}\label{lem:stepSize_hc}
For any $r > 0$, the set $K_r$ is convex and
\[
\pi_{f}(K_{r}) \geq 1 - \frac{4r \sqrt{n}}{\delta}.
\]
\end{lemma}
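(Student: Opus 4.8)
\emph{Convexity of $K_r$.} I would first record that $\phi(x) := \vol{\ball{x}{r} \cap K}$ is the $n$-th power of a concave function. Indeed, for $x_0,x_1$ and $\lambda\in[0,1]$, write $x_\lambda = (1-\lambda)x_0 + \lambda x_1$; since $\ball{x}{r} = x + r\mathbb{B}$ is a fixed convex set translated, $\ball{x_\lambda}{r} = (1-\lambda)\ball{x_0}{r} + \lambda \ball{x_1}{r}$, and $K = (1-\lambda)K + \lambda K$ by convexity, so
\[
\ball{x_\lambda}{r}\cap K \;\supseteq\; (1-\lambda)\bigl(\ball{x_0}{r}\cap K\bigr) + \lambda\bigl(\ball{x_1}{r}\cap K\bigr).
\]
Brunn--Minkowski then gives $\phi(x_\lambda)^{1/n} \geq (1-\lambda)\phi(x_0)^{1/n} + \lambda\phi(x_1)^{1/n}$, i.e.\ $\phi^{1/n}$ is concave on $\reals^n$. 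Since $\vol{\ball{x}{r}}$ is a constant, $K_r = K \cap \{x : \phi(x)^{1/n} \geq (\tfrac34\vol{\ball{x}{r}})^{1/n}\}$ is the intersection of two convex sets, hence convex. This part is routine.

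\emph{The measure bound: reduction to a boundary shell.} Translate so that $\ball{0}{\delta}\subseteq K$. The plan is to show $K\setminus K_r$ sits inside a thin shell near $\partial K$ and then bound its $\pi_f$-mass. For the inclusion: if $x = (1-\lambda)y$ with $y\in K$ and $\lambda\delta \geq r$, then $\ball{x}{r}\subseteq \ball{x}{\lambda\delta} = (1-\lambda)y + \lambda\,\ball{0}{\delta} \subseteq (1-\lambda)K + \lambda K = K$, so $x \in K_r$; taking $\lambda = r/\delta$ gives $(1-r/\delta)K \subseteq K_r$ and hence $K\setminus K_r \subseteq K \setminus (1-r/\delta)K$. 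Thus it suffices to bound $\pi_f$ of this homothetic shell. For the uniform measure this is $1-(1-r/\delta)^n \leq nr/\delta$ by a change of variables, which is already of the right shape but loses a factor $\sqrt n$ against the claimed bound; to recover $\sqrt n$ (which is tight, e.g.\ for a ball) I would replace the crude ``$d(x,\partial K)<r$'' containment by the sharper geometric fact that a radius-$r$ ball whose center lies at depth $h$ behind a supporting hyperplane of $K$ loses at most $\tfrac12 e^{-\Omega(nh^2/r^2)}$ of its volume to that halfspace, so $x\notin K_r$ in fact forces $d(x,\partial K)=O(r/\sqrt n)$ away from the lower-codimensional ``corner'' part of $\partial K$. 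For the $\pi_f$-mass of the shell I would then invoke Lemma~\ref{lemma:local}, reducing to a one-dimensional estimate: writing $f = g^{-(n-1)}$ with $g$ convex (as in Proposition~\ref{prop:harmodal}), the restricted densities $p(t)=f((1-t)a+tb)l(t)^{n-1}$ are unimodal, and one bounds the mass that $p$ can put on the sub-intervals of $[0,1]$ corresponding to the shell.

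\emph{Main obstacle.} The hard part is the measure estimate for a \emph{general} $f$ in the class: unimodality of $p$ alone allows $p$ to concentrate arbitrarily close to an endpoint, so a purely one-dimensional argument cannot by itself give the stated bound. The proof therefore has to use that the shell is genuinely ``geometric'' (a homothetic annulus / an $O(r/\sqrt n)$-neighborhood of $\partial K$) rather than an arbitrary near-endpoint interval, and that the convexity of $g$ forces $\pi_f$ to be sufficiently spread out across such a neighborhood; controlling how much $f$ can vary across the shell is exactly where the log-Lipschitz parameter $\alpha$ (Lemma~\ref{lem:parameters_hc}) enters and pins down the admissible range of $r$. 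The rest --- the Brunn--Minkowski concavity, the homothety inclusion, and the change of variables --- is mechanical.
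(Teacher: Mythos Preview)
Your convexity argument via Brunn--Minkowski is correct and is essentially what the paper does (the paper just cites Brunn--Minkowski without spelling it out).

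For the measure bound, however, your approach diverges from the paper's and has a real gap. You try to trap $K\setminus K_r$ in a homothetic shell and then bound $\pi_f$ of that shell; you correctly observe that this already loses a factor $\sqrt n$ for the uniform measure, and then you get stuck on the $\pi_f$ estimate, conjecturing that the Lipschitz parameter $\alpha$ must enter. But the statement of the lemma does \emph{not} involve $\alpha$ at all --- the bound $1-4r\sqrt n/\delta$ depends only on $r,n,\delta$ --- so any argument that needs smoothness is off track. And as you yourself note, unimodality alone cannot control how much mass $\pi_f$ places near $\partial K$, so localization on the shell will not close the gap either.

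The paper's route avoids shells entirely. It writes $\pi_f$ as a mixture of uniform measures on the (convex) superlevel sets $L(t)=\{x: f(x)\ge t\}$ via the layer-cake representation, and then invokes a purely volumetric lemma of Kannan--Lov\'asz--Simonovits \cite{KLS97}: for a convex body $L$ containing a ball of radius $\rho$,
\[
\frac{1}{\vol L}\int_{L}\frac{\vol{\ball{x}{r}\setminus L}}{\vol{\ball{}{r}}}\,dx \;\le\; \frac{r\sqrt n}{2\rho}.
\]
Since $L(t)\subseteq K$, the fraction of $\ball{x}{r}$ outside $K$ is at most the fraction outside $L(t)$, so this bounds the expectation of $g(x):=\vol{\ball{x}{r}\setminus K}/\vol{\ball{}{r}}$ for $x$ uniform in each level set. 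Markov's inequality then gives $\Pr_{x\sim\mathrm{Unif}(L(t))}\bigl[g(x)>\tfrac14\bigr]\le 4r\sqrt n/\delta$, and integrating over $t$ yields $\pi_f(K\setminus K_r)\le 4r\sqrt n/\delta$. The only property of $f$ used is quasiconcavity (so that the level sets are convex); the $\sqrt n$ saving is built into the KLS lemma rather than coming from a refined shell inclusion, and no smoothness assumption on $f$ is needed.
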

\begin{proof} [Proof of \ref{lem:stepSize_hc} (Points of good local conductance)]
Now, the convexity of $K_r$ is a direct consequence of the Brunn-Minkowski inequality: for compact sets $A, B$ and their Minkowski sum $A+B$, $\vol{A+B}^{1/n} \ge \vol{A}^{1/n}+ \vol{B}^{1/n}$. To prove the second part, we need the following lemma  paraphrased from \cite{KLS97}.
\begin{lemma}\label{lem:blowup}
Let $K$ be a convex set containing a ball of radius $t$.
Then
$\int_{x \in K} \int_{y \in (x + \ball{}{r}) \setminus K} dydx \leq \frac{r\sqrt{n}}{2t}vol(K)vol(\ball{}{r}).$
\end{lemma}

The target density $f$ (with support $K$) can be viewed as follows: first we pick a level set $L(t)$ of $f$, with the appropriate marginal distribution on $t$, then we pick x uniformly from $L(t)$.

Now consider any level set $L(t)$ where $t$ is picked according to the appropriate marginal distribution. If $L(t)$ does not contain a ball of radius $\delta/2$, then the probability of stepping out of $K$ from any point in $L(t)$ is 0. If $L(t)$ contains such a ball, the probability of stepping out is bounded above using Lemma \ref{lem:blowup} by $r\sqrt{n}/\delta$. That is,
\[
\frac{\vol{\ball{x}{r}\setminus K}}{\vol{\ball{}{r}}}\leq \frac{r\sqrt{n}}{\delta}
\]
where $x$ is a random point in $L(t)$. Consider a random variable $g(x)=\frac{\vol{\ball{x}{r}\setminus K}}{\vol{\ball{}{r}}}$ when $x$ is a random point in $L(t)$. Since by the above inequality $E(g(x))\leq r\sqrt{n}/\delta$, we have that
\[
\prob{g(x)>\frac{4r\sqrt{n}}{\delta}}\leq \frac{1}{4}
\]
This implies that at most $1/4$-th of the fraction of points in $L(t)$ step out of $K$ with probability greater than $4r\sqrt{n}/\delta$. Hence, at least $3/4$-th of the fraction of points in $L(t)$ step out of $K$ with probability at most $4r\sqrt{n}/\delta$.

That is, $3/4$-th of the fraction of points in $L(t)$ remain within $K$ with probability at least $1-(4r\sqrt{n}/\delta)$.

Hence $\pi_f(K_r)\geq 1-(4r\sqrt{n}/\delta)$.
\end{proof}

\subsection{Coupling}
In order to prove conductance, we need to prove that when two points are geometrically close, then their one-step distributions overlap. We will need the following technical lemma about spherical caps to prove this.
\begin{lemma}\label{lem:sphereHalfSpace}
Let $H$ be a halfspace in $\reals ^n$ and $B_x$ be a ball whose center is at a distance at most $tr/\sqrt{n}$ from $H$. Then
\[
e^{-\frac{t^2}{4}}>\frac{2\vol{H\cap \ball{}{r}}}{\vol{\ball{}{r}}}>1-t
\]
\end{lemma}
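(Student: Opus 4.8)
The plan is to estimate the volume of a spherical cap of a radius-$r$ ball cut off by a halfspace whose bounding hyperplane passes at signed distance at most $tr/\sqrt n$ from the center, and show that this fraction lies in the stated two-sided window. By scaling we may take $r=1$, so we must bound
\[
\beta(h) = \frac{\vol\{x\in\mathbb{B}_n \suchthat \langle x,u\rangle \ge h\}}{\vol(\mathbb{B}_n)}
\]
for a unit vector $u$ and $|h|\le t/\sqrt n$. The lower bound $\beta(h) > (1-t)/2$ is the easy direction: the cap contains at least the half-ball $\{\langle x,u\rangle\ge 0\}$ minus the slab $\{0\le\langle x,u\rangle < h\}$, and the slab has volume at most $h\cdot V_{n-1} = (t/\sqrt n)\,V_{n-1}$, while $\vol(\mathbb{B}_n) = V_{n-1}\int_{-1}^1 (1-s^2)^{(n-1)/2}\,ds$. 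Using the standard estimate $\int_{-1}^1 (1-s^2)^{(n-1)/2}ds \ge c/\sqrt n$ with an explicit constant (it is $\Theta(1/\sqrt n)$, computable via $\Gamma$-functions or by comparison with a Gaussian), one gets $\beta(h) \ge \frac12 - \frac{h V_{n-1}}{\vol(\mathbb{B}_n)} \ge \frac12 - \frac{t}{c'} > \frac{1-t}{2}$ once $c'$ is chosen appropriately; I would chase the constant to confirm the bare inequality $2\beta(h) > 1-t$ holds as written.

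**For the upper bound** $2\beta(h) < e^{-t^2/4}$, I would use the one-dimensional marginal. Writing $\beta(h) = \frac{V_{n-1}}{\vol(\mathbb{B}_n)}\int_h^1 (1-s^2)^{(n-1)/2}\,ds$ and $\frac12 = \frac{V_{n-1}}{\vol(\mathbb{B}_n)}\int_0^1 (1-s^2)^{(n-1)/2}\,ds$, the ratio
\[
2\beta(h) = \frac{\int_h^1 (1-s^2)^{(n-1)/2}\,ds}{\int_0^1 (1-s^2)^{(n-1)/2}\,ds}
\]
is at most $1 - \frac{\int_0^h (1-s^2)^{(n-1)/2}ds}{\int_0^1(1-s^2)^{(n-1)/2}ds}$. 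Bounding the numerator below by $h(1-h^2)^{(n-1)/2} \ge h\,e^{-(n-1)h^2}$ (using $1-u\ge e^{-2u}$ for small $u$, valid since $h^2\le t^2/n$ is small) and the denominator above by $\Theta(1/\sqrt n)$, this shows $2\beta(h) \le 1 - c'' h\sqrt n\,e^{-h^2 n} \le 1 - c''' t\,e^{-t^2}$, and then $1-x\le e^{-x}$ converts this to an exponential bound; I would then verify that the resulting exponent dominates $-t^2/4$ for all relevant $t$ (and handle large $t$, where $\beta$ is tiny, separately or by noting the inequality is only needed for, say, $t\le 2$ since otherwise the ball escapes $K$ entirely in the intended application).

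**The main obstacle** I anticipate is not the structure of the argument but pinning down the absolute constants so that the clean inequalities $e^{-t^2/4} > 2\beta > 1-t$ hold \emph{without any constant factors}, uniformly in $n$ and over the full range of $t$ for which the lemma is invoked. The slack is real but thin, so I would need the sharp two-sided estimate $\frac{\sqrt 2}{\sqrt{\pi(n-1)}} \lesssim \int_{-1}^1(1-s^2)^{(n-1)/2}ds \lesssim \frac{\sqrt 2 \cdot C}{\sqrt n}$ — equivalently, a careful Wallis-type bound on $\frac{V_{n-1}}{\vol(\mathbb{B}_n)} = \frac{1}{2}\big/\int_0^1(1-s^2)^{(n-1)/2}ds$, which behaves like $\sqrt{n/(2\pi)}$ up to lower-order terms. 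A secondary subtlety is the phrasing ``center at distance at most $tr/\sqrt n$ from $H$'': I would interpret $H$ as the halfspace and the relevant cap as $\ball{x}{r}\setminus H$ (the part outside $H$), so that when the center is exactly on the boundary the fraction is exactly $1/2$ and the bounds read $e^{0}>1>1-0$ — consistent — and the one-sided distance controls how far $\beta$ can move from $1/2$ in either direction, which is why both bounds are needed.
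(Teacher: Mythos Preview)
The paper does not prove this lemma; it is stated without proof as a standard spherical-cap estimate and then invoked in the proofs of Lemmas~\ref{lem:coupling_hc} and~\ref{lem:gap}. So there is no argument in the paper to compare your proposal against.

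Your outline is the standard one and is correct. For the lower bound, your slab argument gives $2\beta(h)\ge 1-2h\cdot V_{n-1}/V_n$ (with $V_k$ the volume of the unit $k$-ball), and the only constant you actually need is $V_{n-1}/V_n\le\sqrt{n}/2$, which follows from a Wallis/Gautschi bound on $\Gamma(\tfrac{n}{2}+1)/\Gamma(\tfrac{n+1}{2})$; once that is in hand, $2\beta>1-t$ drops out with no further chasing. For the upper bound, your route through $2\beta=1-\int_0^h\big/\int_0^1$ followed by $1-x\le e^{-x}$ does force a split into small and large $t$, as you note. The more direct route bounds $\int_h^1(1-s^2)^{(n-1)/2}\,ds\le e^{-(n-1)h^2/2}\int_0^\infty e^{-(n-1)u^2/2}\,du$ via the substitution $s=h+u$ together with $1-s^2\le e^{-s^2}$, and compares to a lower bound on $\int_0^1(1-s^2)^{(n-1)/2}\,ds$ obtained on the sub-interval $[0,1/\sqrt{n-1}]$; this yields a bound of the form $2\beta(h)\le C\,e^{-(n-1)h^2/2}$ uniformly in $t$, avoiding the case split. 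Your diagnosis that the only real content in either direction is pinning down the absolute constants is accurate.
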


\begin{lemma}\label{lem:coupling_hc}
For $r\leq \min\{\delta,\frac{2s\delta}{(\alpha^{s}-1)}\}$, if $u, v \in K_r$, $\norm{u-v} < r/16\sqrt{n}$, then
\[
d_{tv} (P_{u}, P_{v})\leq 1-\frac{7}{16}
\]
\end{lemma}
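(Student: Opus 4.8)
The plan is to show that the one-step distributions $P_u$ and $P_v$ are close by exhibiting a large common region on which both densities are essentially the same. There are three sources of "loss" that separate $P_u$ from $P_v$: (i) the balls $\ball{u}{r}$ and $\ball{v}{r}$ do not coincide, (ii) part of each ball may lie outside $K$, where the walk rejects the move, and (iii) the Metropolis filter $\min\{1,f(y)/f(x)\}$ may differ at the two base points. I would bound each of these separately and then combine. Concretely, I would write $d_{tv}(P_u,P_v) \le 1 - \int_{\reals^n} \min\{p_u(y),p_v(y)\}\,dy$ where $p_x(y) = \frac{1}{\vol{\ball{}{r}}}\mathbf{1}[y\in\ball{x}{r}\cap K]\min\{1,f(y)/f(x)\}$ is the density of $P_x$ on $\reals^n\setminus\{x\}$, and lower bound the overlap integral.

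For (i), since $\norm{u-v} < r/16\sqrt{n} \le r/16$, the volume of the symmetric difference $\ball{u}{r}\triangle\ball{v}{r}$ is a small fraction of $\vol{\ball{}{r}}$; one can quote (or derive from) Lemma~\ref{lem:sphereHalfSpace} applied to the bounding hyperplane between the two balls, which gives that $\vol{\ball{u}{r}\cap\ball{v}{r}}/\vol{\ball{}{r}} \ge 1 - t/2 \cdot(\text{const})$ with $t$ proportional to $\norm{u-v}\sqrt{n}/r \le 1/16$, so the loss here is $O(1/16)$. For (ii), because $u,v\in K_r$ we have $\vol{\ball{u}{r}\cap K}\ge \tfrac34\vol{\ball{}{r}}$ and likewise for $v$, so each ball loses at most a $1/4$ fraction to the complement of $K$, costing at most $1/4$ each — but I would be careful to count this loss only once against the common region, so the combined geometric-plus-boundary loss is on the order of $\tfrac14 + \tfrac14 + O(1/16)$, which must come out below $7/16$ after bookkeeping; this is where the precise constants in the hypothesis $3/4$ and $r/16\sqrt n$ are tuned. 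For (iii), the key point is the choice $r \le \min\{\delta, \frac{2s\delta}{\alpha^s-1}\}$: by Lemma~\ref{lem:parameters_hc} with $c=2$ (using that $r$ is at most this threshold, hence $\norm{u-v}\le r$ gives $f(u)/f(v)\le 2$ and $f(v)/f(u)\le 2$), the Metropolis acceptance functions $\min\{1,f(y)/f(u)\}$ and $\min\{1,f(y)/f(v)\}$ differ pointwise by at most a factor $2$, and more to the point the integrals $\int \min\{1,f(y)/f(u)\}$ over the common ball region differ from each other (and from a common lower value) by only a bounded multiplicative amount; I would show the surviving overlap mass is at least $\tfrac12\cdot(\text{fraction of ball that is common and inside }K)$.

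Assembling: the common good region $W = (\ball{u}{r}\cap\ball{v}{r}\cap K)$ has $\vol{W}/\vol{\ball{}{r}} \ge 1 - \tfrac14 - \tfrac14 - O(1/16) \ge \tfrac12 - O(1/16)$, and on $W$ both $p_u$ and $p_v$ are at least $\frac{1}{\vol{\ball{}{r}}}\cdot\tfrac12\cdot(\min\{1,f(y)/f(\cdot)\}$ comparison factor$)$, so $\int\min\{p_u,p_v\} \ge \tfrac12\cdot(\tfrac12 - O(1/16)) \ge \tfrac{7}{16}$ with the constants as stated, giving $d_{tv}(P_u,P_v)\le 1-\tfrac{7}{16}$. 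The main obstacle I anticipate is the careful bookkeeping in step (ii)/(iii): the naive union bound "lose $1/4$ for each of $u$ and $v$" plus "lose a factor $2$ from the filter" is not obviously better than $7/16$, so one has to argue that the boundary losses for $u$ and $v$ overlap substantially (both balls are nearly the same set) and that the filter comparison only costs a factor $2$ in density, not in measure — the cleanest way is probably to fix a single reference quantity (e.g. $\min\{f(u),f(v)\}$ and the intersection ball) and lower bound both $p_u(y)$ and $p_v(y)$ by the same explicit expression on $W$, then integrate once. I would also double-check whether the intended reading is $\norm{u-v} < r/16\sqrt n$ exactly as written, since the constant $7/16$ in the conclusion is clearly reverse-engineered from these thresholds.
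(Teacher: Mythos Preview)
Your approach matches the paper's. The bookkeeping obstacle you flag at the end dissolves once you assume WLOG $f(v)\ge f(u)$: on $\ball{u}{r}\cap\ball{v}{r}\cap K$ this forces $\min\{1,f(y)/f(u)\}\ge\min\{1,f(y)/f(v)\}$, hence $\min\{p_u,p_v\}=p_v$ there, so only the filter at $v$ and the single subtraction $\vol{\ball{v}{r}\setminus K}$ enter---no double-counting for the two balls. Lemma~\ref{lem:parameters_hc} with $c=2$ then gives $f(y)/f(v)\ge\tfrac12$ for \emph{every} $y\in\ball{v}{r}$ (not merely a comparison between $u$ and $v$), after which the volume arithmetic is exactly what you wrote. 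One small correction: your parenthetical suggestion to fix the reference quantity at $\min\{f(u),f(v)\}$ should read $\max\{f(u),f(v)\}$, since dividing by the larger value is what gives a common \emph{lower} bound on both acceptance functions.
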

\begin{proof} 
We may assume that $f(v)\geq f(u)$. Then,
\[
d_{tv} (P_{u}, P_{v})  \leq 1 - \frac{1}{\vol{\ball{}{r}}} \int_{\ball{v}{r} \cap \ball{u}{r}\cap K} \min\left\{1, \frac{f(y)}{f(v)}\right\} dy
\]
Let us lower bound the second term in the right hand side.
\begin{align*}
\int_{\ball{v}{r} \cap \ball{u}{r}\cap K} \min\left\{1, \frac{f(y)}{f(v)}\right\} dy
& \geq \int_{\ball{v}{r} \cap \ball{u}{r}\cap K} \min\left\{1, \frac{f(y)}{f(v)}\right\} dy \\
& \geq \left(\frac{1}{2}\right) \vol{\ball{v}{r} \cap \ball{u}{r} \cap  K} \quad \quad \text{(Consequence of Lemma \ref{lem:parameters_hc})}\\
& \geq \left(\frac{1}{2}\right) \left(\vol{\ball{v}{r}} - \vol{\ball{v}{r}\setminus\ball{u}{r}}-\vol{\ball{v}{r}\setminus K}\right)\\
& \geq \left(\frac{1}{2}\right) \left(\vol{\ball{v}{r}} - \frac{1}{16}\vol{\ball{}{r}}- \frac{1}{16}\vol{\ball{}{r}}\right)\\
& \geq \left(\frac{7}{16}\right)\vol{\ball{}{r}}
\end{align*}
where the bound on $\vol{\ball{v}{r}\setminus\ball{u}{r}}$ is derived from Lemma \ref{lem:sphereHalfSpace} and $\vol{\ball{v}{r}\setminus K}$ is bounded using the fact that $v\in K_r$. Hence,
\[
d_{tv} (P_{u}, P_{v})\leq 1-\frac{7}{16}
\]
\end{proof}

\subsection{Conductance and mixing time}
Consider the ball walk with metropolis filter using the $s$-harmonic-concave distribution function oracle (whose parameters are $(\alpha,\delta)$) with ball steps of radius $r$.
\begin{lemma} \label{lemma:ergo_hc}
For any $\eps_1>0$, let $D$ be the diameter of the ball such that $\pi_f(\ball{0}{D})\geq 1-\frac{\eps_1}{2}$. Let $S \subseteq \reals^{n}$ be such that $\pi_f(S)\geq \eps_1$ and $\pi_f(\reals^n\backslash S)\geq \eps_1$. Then, for ball walk radius $r\leq \min\{\frac{\eps_1\delta}{8\sqrt{n}},\frac{2s\delta}{(\alpha^{s}-1)}\}$, we have that
\[
\Phi(S) \geq \frac{r}{2^{9}\sqrt{n}D}\min\{\pi_f(S)-\eps_1,\pi_f(\reals^n\backslash S)-\eps_1\}
\]
\end{lemma}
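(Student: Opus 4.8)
The plan is the standard ball-walk conductance argument, with one twist needed because $K$ may be unbounded. Normalize so that $\int_{\reals^n}f=1$, hence $\pi_{f}(A)=\int_A f$, and assume without loss of generality $\pi_{f}(S)\le\pi_{f}(\reals^n\setminus S)$, so that $M:=\min\{\pi_{f}(S)-\eps_1,\ \pi_{f}(\reals^n\setminus S)-\eps_1\}=\pi_{f}(S)-\eps_1$; if $M\le 0$ there is nothing to prove, so assume $M>0$. Let $G:=K_r\cap\ball{0}{D}$, which is convex because $K_r$ is convex by Lemma~\ref{lem:stepSize_hc}. Since $r\le\eps_1\delta/(8\sqrt n)$, that lemma also gives $\pi_{f}(K_r)\ge 1-4r\sqrt n/\delta\ge 1-\eps_1/2$, so together with the defining property of $D$ we get $\pi_{f}(\reals^n\setminus G)\le\eps_1$. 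Fix $t_0=1/8$ and put
\[
S_1=\{x\in S\cap G:\ P_{x}(\reals^n\setminus S)<t_0\},\qquad S_2=\{x\in(\reals^n\setminus S)\cap G:\ P_{x}(S)<t_0\},\qquad S_3=\reals^n\setminus(S_1\cup S_2).
\]

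The first step is to show $d(S_1,S_2)\ge r/(16\sqrt n)$. If some $u\in S_1$, $v\in S_2$ had $\norm{u-v}<r/(16\sqrt n)$, then $u,v\in K_r$; since also $r\le\min\{\delta,\,2s\delta/(\alpha^s-1)\}$ by hypothesis, Lemma~\ref{lem:coupling_hc} gives $d_{tv}(P_{u},P_{v})\le 1-7/16$. On the other hand $u\in S$ and $v\in\reals^n\setminus S$ with $P_{u}(\reals^n\setminus S)<t_0$ and $P_{v}(S)<t_0$ force the one-step distributions to be at total-variation distance at least $1-2t_0=3/4>9/16$, a contradiction. The second step lower-bounds the flow: every point of $S_3\cap G$ lies either in $(S\cap G)\setminus S_1$ or in $((\reals^n\setminus S)\cap G)\setminus S_2$, and hence leaves its current side with probability at least $t_0$; integrating and using reversibility of the chain ($\Phi(S)=\Phi(\reals^n\setminus S)$) yields
\[
2\Phi(S)\ \ge\ t_0\,\pi_{f}(S_3\cap G).
\]

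To bound $\pi_{f}(S_3\cap G)$ from below, apply Theorem~\ref{thm:isoperimetric} to the restriction of $f$ to $G$. This restriction is again $(1/(n-1))$-harmonic-concave (restricting a harmonic-concave function to a convex subset of its support preserves the defining inequality), its support $G$ has diameter at most $D$, and in the partition $\reals^n=S_1\cup S_2\cup(\reals^n\setminus(S_1\cup S_2))$ we have $S_1,S_2\subseteq G$; since the normalizing factor $\pi_{f}(G)$ cancels on both sides, the theorem gives
\[
\pi_{f}(S_3\cap G)\ \ge\ \frac{d(S_1,S_2)}{D}\,\min\{\pi_{f}(S_1),\pi_{f}(S_2)\}\ \ge\ \frac{r}{16\sqrt n\,D}\,\min\{\pi_{f}(S_1),\pi_{f}(S_2)\}
\]
whenever $S_1,S_2$ are non-empty (if either is empty the right side is $0$ and the bound is trivial). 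Combining with the flow inequality, $\Phi(S)\ge\frac{t_0 r}{32\sqrt n D}\min\{\pi_{f}(S_1),\pi_{f}(S_2)\}$. Now split on the size of this minimum. If $\min\{\pi_{f}(S_1),\pi_{f}(S_2)\}\ge M/2$, then $\Phi(S)\ge\frac{t_0 r}{64\sqrt n D}M=\frac{r}{2^{9}\sqrt n D}M$ and we are done. If instead $\pi_{f}(S_1)<M/2$ (the case $\pi_{f}(S_2)<M/2$ being symmetric via $\Phi(S)=\Phi(\reals^n\setminus S)$), then since $\pi_{f}(S\cap G)\ge\pi_{f}(S)-\pi_{f}(\reals^n\setminus G)\ge\pi_{f}(S)-\eps_1=M$, the set $(S\cap G)\setminus S_1$ has $\pi_{f}$-measure more than $M/2$ and consists of points that leave $S$ with probability at least $t_0$, so $\Phi(S)\ge t_0\,\pi_{f}((S\cap G)\setminus S_1)>M/16\ge\frac{r}{2^{9}\sqrt n D}M$, using that $r\le 2^{5}\sqrt n D$ in the relevant range.

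The step I expect to be the real crux is running the isoperimetric inequality on $f$ restricted to the convex set $G=K_r\cap\ball{0}{D}$ rather than on a mere truncation of $f$ to $\ball{0}{D}$: restricting all the way down to $G$ makes $S_3\cap G$ exactly the residual piece whose measure Theorem~\ref{thm:isoperimetric} controls, so the passage from isoperimetry to ergodic flow loses nothing to the boundary collar $K\setminus K_r$ or to the tail outside $\ball{0}{D}$ --- a loss that would otherwise be of order $\eps_1$ and would swamp the target bound; this is also what pins down the constraint $r\le\eps_1\delta/(8\sqrt n)$. The remaining pieces --- the separation claim and the flow lower bound --- are routine once Lemmas~\ref{lem:stepSize_hc} and~\ref{lem:coupling_hc} and Theorem~\ref{thm:isoperimetric} are in hand; the only real care needed is tracking the absolute constants, so that $t_0=1/8$ together with the bound $D$ on the diameter of $G$ reproduce precisely the claimed factor $2^{9}$.
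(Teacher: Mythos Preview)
Your argument is correct and follows essentially the same route as the paper: the standard conductance argument combining the coupling lemma (Lemma~\ref{lem:coupling_hc}) for the separation claim with the isoperimetric inequality (Theorem~\ref{thm:isoperimetric}) for the measure of the ``border'' set, then a case split on whether $S_1$ (or $S_2$) is small. The paper uses the threshold $7/32$ where you use $t_0=1/8$, and it defines $S_1,S_2,S_3$ over all of $\reals^n$ first and only afterwards intersects with $K_r$ to form $S_i'=S_i\cap K_r$; these are cosmetic differences.

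One place where your write-up is in fact more careful than the paper's: you explicitly work inside $G=K_r\cap\ball{0}{D}$ before invoking Theorem~\ref{thm:isoperimetric}, so that the diameter appearing in the isoperimetric inequality is genuinely at most $D$. The paper's proof applies the isoperimetric inequality to the $S_i'$ with diameter $D$ without visibly intersecting with $\ball{0}{D}$, which is fine in the setting of Theorem~\ref{thm:sampling_hc} (where $K$ itself has diameter $D$) but, taken literally, leaves a small gap relative to the hypothesis of Lemma~\ref{lemma:ergo_hc} as stated. Your restriction to $G$ closes that gap cleanly, at the cost of the extra $\eps_1/2$ you budget for the tail outside $\ball{0}{D}$. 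The final appeal to ``$r\le 2^5\sqrt{n}D$'' in your degenerate case is the same kind of implicit parameter comparison the paper makes when it writes ``which implies the lemma'' after obtaining $\Phi(S)\ge\frac{1}{32}(\pi_f(S)-\eps_1)$; it is harmless in the intended regime.
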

\begin{proof} [Proof of Lemma \ref{lemma:ergo_hc}]
Given $K$ as the convex set over which we want to sample points according to the $(1/(n-1))$-harmonic-concave distribution, define $K_r$ as above. Further, for all subsets $A\subseteq K$, we have that
\begin{align}\label{eqn:pifBound_hc}
\pi_f(A\cap {K'}) & = \pi_f(A\cap \cap K_r) \nonumber\\
& \geq \pi_f(A) - \eps_1 & \text{(by Lemma \ref{lem:stepSize_hc} using $r\leq \frac{\eps_1\delta}{8\sqrt{n}}$)}
\end{align}
Using $S$, define $S_{1}, S_{2}, S_{3}$ as follows.
\begin{align*}
S_{1} & = \left\{x \in S \suchthat P_{x}(\reals^{n} \setminus S) \leq \frac{7}{32} \right\} \\
S_{2} & = \left\{x \in \reals^{n} \setminus S \suchthat P_{x}(S) \leq \frac{7}{32} \right\} \\
S_{3} & = \reals^{n} \setminus (S_{1} \cup S_{2}).
\end{align*}
Also define $S_{i}'$ as
\[
S_{i}' = S_{i} \cap K_r, \qquad \text{for $i=1, 2, 3$.}
\]
The ergodic flow of $S$ can be written as
\begin{align*}
\Phi(S) & = \frac{1}{2} \left( \int_{S} P_{x}(\reals^{n} \setminus S) f(x)~ dx + \int_{\reals_{n} \setminus S} P_{x}(S) f(x)~ dx \right) \\
& \geq \frac{7}{64} \pi_{f}(S_{3}) \geq \frac{1}{16} \pi_f(S_3')
\end{align*}

Suppose $\pi_{f}(S_{1}') \leq \pi_{f}(S \cap {K_r})/2$, then $\pi_{f}(S_3') \geq \pi_{f}(S \cap {K_r})/2$ because $S \cap {K_r} \subseteq S_{1}' \cup S_3'$. Thus
\begin{align*}
\Phi(S) & \geq \frac{1}{32} \pi_{f}(S \cap {K_r}) \geq \frac{1}{32} \left(\pi_{f}(S) - \eps_1\right) &\text{(By (\ref{eqn:pifBound_hc}))}
\end{align*}
which implies the lemma.
So, we may assume that $\pi_{f}(S_{1}') \geq \pi_{f}(S \cap {K_r})/2$, and similarly $\pi_{f}(S_{2}') \geq \pi_{f}\left((\reals^{n} \setminus S) \cap {K_r}\right)/2$. Then, using Theorem \ref{thm:isoperimetric},
\begin{align*}
\Phi(S) & \geq \frac{1}{16} \pi_{f}(S_3') \\
& \geq \frac{1}{16} \frac{d(S_{1}',S_{2}')}{\diam} \min\left\{\pi_{f}(S_{1}'), \pi_{f}(S_{2}')\right\} \\
& \geq \frac{1}{32} \frac{d(S_{1}, S_{2})}{\diam} \cdot \min\left\{\pi_{f}(S \cap {K_r}), \pi_{f}\left((\reals^{n} \setminus S) \cap {K_r}\right)\right\} \\
\end{align*}
Now, for any $u\in S_1$, $v\in S_2$, $d_{tv}(P_u,P_v)\geq 1- 2\cdot \max\{P_u(\reals ^n\setminus S),P_v(S)\}\geq 1-\frac{7}{16}$. Also, $r \leq \min\{\frac{\eps_1\delta}{8\sqrt{n}},\frac{2s\delta}{(\alpha^{s}-1)}\}< \min\{\delta,\frac{2s\delta}{(\alpha^{s}-1)}\}$. Hence, by Lemma \ref{lem:coupling_hc}, $d(S_1,S_2)\geq \frac{r}{16\sqrt{n}}$. Therefore,
\begin{align*}
\Phi(S) & \geq \frac{1}{32} \cdot \frac{r}{16 \sqrt{n}} \cdot \frac{1}{\diam} \cdot \min\left\{\pi_{f}(S) - \eps_1, \pi_{f}(\reals^{n} \setminus S) - \eps_1\right\}  &\text{By (\ref{eqn:pifBound_hc})}\\
& \geq \frac{r}{2^{9} \sqrt{n} {\diam} }\min\left\{\pi_{f}(S) - \eps_1, \pi_{f}(\reals^{n} \setminus S) - \eps_1\right\}
\end{align*}
\end{proof}

Using the above lemma, we prove Theorem \ref{thm:sampling_hc}.
\begin{proof} [Proof of Theorem \ref{thm:sampling_hc}]
On setting $\eps_1=\eps/2H$ in Lemma \ref{lemma:ergo_hc}, we have that for ball-walk radius $r= \min\{\frac{\eps\delta}{16H\sqrt{n}},\frac{2s\delta}{(\alpha^{s}-1)}\}$,
\[
\phi_{\eps_1}\geq\frac{r}{2^{9}\sqrt{n}D}.
\]
By definition $H_s\leq H\cdot s$ and hence by Lemma \ref{lemma:mix},
\[
\abs{\sigma_{m}(S) - \pi_{f}(S)} \leq H\cdot {s} + {H}\cdot exp\left\{-\frac{mr^2}{2^{19}nD^2}\right\}
\]
which gives us that beyond
\[
m\geq \frac{2^{19}nD^2}{r^2}\log{\frac{2H}{\eps}}
\]
steps, $\abs{\sigma_{m}(S) - \pi_{f}(S)} \leq\eps$. Substituting for $r$, we get the theorem.
\end{proof}

\subsection{Sampling the Cauchy density}
In this section, we prove certain properties of the Cauchy density along with the crucial coupling lemma leading to Theorem \ref{thm:cauchy_sampling}. Without loss of generality, we may assume that the distribution given by the oracle is,
\begin{equation}
f(x)\propto \left\{
\begin{array}{ll}
1/(1+||x||^2)^{\frac{n+1}{2}} & \text{if } x\in K \text{,}\\
0 & \text{otherwise.}
\end{array}\right.
\end{equation}
This is because, either we are explicitly given the matrix $A$ of a general Cauchy density, or we can compute it using the function $f$ at a small number of points and apply a linear transformation. Further, note that by the hypothesis of Theorem \ref{thm:cauchy_sampling}, we may assume that $K$ contains a unit ball.
\begin{prop}\label{prop:cauchy_hc}
The Cauchy density function is $(1/(n-1))$-harmonic-concave.
\end{prop}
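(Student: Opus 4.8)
The plan is to use the standard reformulation of $s$-harmonic-concavity: a nonnegative function $f$ is $s$-harmonic-concave if and only if $g := f^{-s}$ is convex on its support. Indeed, raising both sides of the defining inequality $f(\lambda x+(1-\lambda)y)\ge \bigl(\lambda f(x)^{-s}+(1-\lambda)f(y)^{-s}\bigr)^{-1/s}$ to the power $-s$ reverses the inequality (since $t\mapsto t^{-s}$ is decreasing for $s>0$) and turns the right-hand side into $\lambda g(x)+(1-\lambda)g(y)$. For $s=1/(n-1)$ it therefore suffices to show that $g(x)=f(x)^{-1/(n-1)}$ is convex.

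First I would reduce to the spherically symmetric normalization $f(x)\propto (1+\sqnorm{x})^{-(n+1)/2}$. An invertible affine map $x\mapsto A(x-m)$ carries a convex combination to a convex combination with the same weights, and multiplying $f$ by a positive constant multiplies $g$ by a positive constant; neither operation affects convexity of $g$. Also, restricting $f$ to a convex set $K$ only shrinks the set of pairs $(x,y)$ over which the inequality must be checked, so it is enough to verify the property on all of $\reals^n$. With this normalization, $g(x)=(1+\sqnorm{x})^{\beta}$ with $\beta=\frac{n+1}{2(n-1)}$.

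Next I would compute the Hessian of $g$ directly. Writing $u=1+\sqnorm{x}$, one gets
\[
\nabla^2 g(x) = 2\beta\, u^{\beta-2}\bigl(u\,I + 2(\beta-1)\,x x^{T}\bigr).
\]
Since $u>0$ and $\beta>0$, positive semidefiniteness of $\nabla^2 g$ is equivalent to $u\,I+2(\beta-1)x x^{T}\psd 0$. This matrix acts as multiplication by $u>0$ on the orthogonal complement of $x$, and on $x$ itself it has eigenvalue $u+2(\beta-1)\sqnorm{x}=1+(2\beta-1)\sqnorm{x}$. Hence $g$ is convex on $\reals^n$ precisely when $1+(2\beta-1)\sqnorm{x}\ge 0$ for all $x$, which holds as soon as $\beta\ge \tfrac12$.

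Finally, $\beta=\frac{n+1}{2(n-1)}\ge \tfrac12 \iff n+1\ge n-1$, which is always true, completing the proof. (The inequality is strict for every finite $n$ and $\beta\to\tfrac12^{+}$ as $n\to\infty$, so the Cauchy density lies strictly inside the $(1/(n-1))$-harmonic-concave class while approaching its boundary — consistent with the ``limit of convexity'' theme.) There is no genuine obstacle here; the only points requiring care are the sign bookkeeping when passing between $f$ and $g=f^{-s}$, and stating the affine-plus-scaling reduction cleanly, after which the one-line Hessian computation settles everything.
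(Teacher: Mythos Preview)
Your proof is correct and follows essentially the same approach as the paper: both reduce $(1/(n-1))$-harmonic-concavity of $f$ to convexity of $g(x)=(1+\|x\|^2)^{(n+1)/(2(n-1))}$, and both identify the key threshold $\beta=\frac{n+1}{2(n-1)}\ge\frac12$. The only cosmetic difference is that the paper verifies convexity of $g$ by observing that $(1+\|x\|^2)^{1/2}$ is convex and that composing a nonnegative convex function with the increasing convex map $t\mapsto t^{2\beta}$ (valid since $2\beta\ge 1$) preserves convexity, whereas you arrive at the same threshold by a direct Hessian computation.
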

\begin{proof} [Proof of Proposition \ref{prop:cauchy_hc} (Cauchy is 1/(n-1)-harmonic-concave)]
To check for $(1/(n-1))$-harmonic-concavity, we need to check that
\begin{displaymath}
g(x)=(1+\sum_i x_i ^2)^{\frac{n+1}{2(n-1)}}
\end{displaymath}
is convex. This follows from the following two observations:
\begin{enumerate}
\item $\frac{n+1}{2(n-1)}>\frac{1}{2}$
\item $(1+\sum_i x_i ^2)^{1/2}$ is convex.
\end{enumerate}
\end{proof}

Proposition \ref{prop:large-ball} says that we can find a ball of radius $O(\sqrt{n}/\eps_1)$ outside which the Cauchy density has at most $\eps_1$ probability mass.
\begin{prop}\label{prop:large-ball}
\[
\prob{\norm{x} \geq \frac{2\sqrt{2n}}{ \eps_1}} \leq \eps_1.
\]
\end{prop}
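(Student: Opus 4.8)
The plan is to reduce the $n$-dimensional tail bound to a one-dimensional integral by spherical symmetry, and then estimate that integral crudely. Working in full space $\reals^n$ (the restriction to $K$ only decreases the probability of the event $\norm{x}\ge R$, so it suffices to bound the tail of the unrestricted Cauchy density), write $\prob{\norm{x}\ge R}$ as the ratio of $\int_{\norm{x}\ge R}(1+\norm{x}^2)^{-(n+1)/2}\,dx$ to $\int_{\reals^n}(1+\norm{x}^2)^{-(n+1)/2}\,dx$. Passing to polar coordinates, both integrals become one-dimensional: with $\omega_{n-1}$ the surface area of the unit sphere, the numerator is $\omega_{n-1}\int_R^\infty \rho^{n-1}(1+\rho^2)^{-(n+1)/2}\,d\rho$ and the denominator is $\omega_{n-1}\int_0^\infty \rho^{n-1}(1+\rho^2)^{-(n+1)/2}\,d\rho$, so the factors $\omega_{n-1}$ cancel and we are left with a ratio of scalar integrals in $\rho$.

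Next I would bound the numerator and denominator separately. For the denominator, note $\int_0^\infty \rho^{n-1}(1+\rho^2)^{-(n+1)/2}\,d\rho \ge \int_0^1 \rho^{n-1}2^{-(n+1)/2}\,d\rho = 2^{-(n+1)/2}/n$, which gives a clean lower bound of order $2^{-n/2}/n$. For the numerator, on the range $\rho\ge R$ we have $1+\rho^2\ge \rho^2$, hence $\rho^{n-1}(1+\rho^2)^{-(n+1)/2}\le \rho^{-2}$, so $\int_R^\infty \rho^{n-1}(1+\rho^2)^{-(n+1)/2}\,d\rho \le \int_R^\infty \rho^{-2}\,d\rho = 1/R$. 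Dividing, $\prob{\norm{x}\ge R}\le n\,2^{(n+1)/2}/R$. Setting $R = 2\sqrt{2n}/\eps_1$ is not quite enough with these crude constants, so I would instead sharpen the denominator estimate: substituting $\rho = \tan\theta$ turns $\int_0^\infty \rho^{n-1}(1+\rho^2)^{-(n+1)/2}\,d\rho$ into $\int_0^{\pi/2}\sin^{n-1}\theta\,d\theta$, a Wallis integral of size $\Theta(1/\sqrt{n})$; similarly the numerator becomes $\int_{\arctan R}^{\pi/2}\sin^{n-1}\theta\,d\theta \le \int_{\arctan R}^{\pi/2}\cos\theta\,d\theta \le 1/R$ (using $\sin^{n-1}\theta\le 1$ and then $\cos\theta \le \pi/2-\theta$... ), and the ratio is $O(\sqrt n / R)$, which at $R = 2\sqrt{2n}/\eps_1$ yields the claimed $\eps_1$ up to the implied constant.

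The main obstacle is getting the constants to line up so that the bound is exactly $\eps_1$ rather than $O(\eps_1)$ with an unspecified constant: the honest way is the Wallis-integral computation, where one needs $\int_0^{\pi/2}\sin^{n-1}\theta\,d\theta \ge c/\sqrt n$ with a good enough $c$ and a matching upper bound on the tail piece. This is a standard but slightly fiddly estimate; everything else (spherical symmetry, the reduction to one dimension, and the crude $\rho^{-2}$ domination of the tail integrand) is routine. I expect the final inequality to come out with the constant $2\sqrt 2$ as stated precisely because the proof balances $1/R$ against a sharp lower bound on $\int_0^{\pi/2}\sin^{n-1}\theta\,d\theta$.
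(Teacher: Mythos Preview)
Your proposal is correct and follows essentially the same route as the paper: reduce by spherical symmetry to a ratio of one-dimensional integrals, bound the numerator by $\int_R^\infty \rho^{-2}\,d\rho = 1/R$, and evaluate the denominator via the substitution $\rho=\tan\theta$ as the Wallis integral $\int_0^{\pi/2}\sin^{n-1}\theta\,d\theta$. The paper pins down the constant you call ``fiddly'' by the double-factorial recursion and Wallis' inequality to get $\int_0^{\pi/2}\sin^{n-1}\theta\,d\theta \ge 1/(2\sqrt{2n})$, whence $\prob{\norm{x}\ge t}\le 2\sqrt{2n}/t$; your detour through a crude $[0,1]$ lower bound on the denominator and the garbled alternative tail estimate via $\arctan$ are both unnecessary once you commit to the Wallis route.
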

\begin{proof} [Proof of Proposition \ref{prop:large-ball} (Ball of large mass)]
\begin{align*}
\prob{\norm{x} \geq t} & = \dfrac{\int_{\norm{x} \geq t} \dfrac{1}{(1 + \sqnorm{x})^{(n+1)/2}}~ dx}{\int_{\reals^{n}} \dfrac{1}{(1 + \sqnorm{x})^{(n+1)/2}}~ dx} \\
& = \dfrac{\int_{\sph^{n-1}} d\omega \cdot \int_{t}^{\infty} \dfrac{r^{n-1}}{(1 + r^{2})^{(n+1)/2}}~ dr}{\int_{\sph^{n-1}} d\omega \cdot \int_{0}^{\infty} \dfrac{r^{n-1}}{(1 + r^{2})^{(n+1)/2}}~ dr} & \text{using polar coordinates} \\
& = \dfrac{\int_{t}^{\infty} \dfrac{r^{n-1}}{(1 + r^{2})^{(n+1)/2}}~ dr}{\int_{0}^{\infty} \dfrac{r^{n-1}}{(1 + r^{2})^{(n+1)/2}}~ dr}.
\end{align*}
Now we will analyze the numerator, call it $N$, and the denominator, call it $D$, separately.
\begin{align} \label{eq:num}
N = \int_{t}^{\infty} \frac{r^{n-1}}{(1 + r^{2})^{(n+1)/2}}~ dr \leq \int_{t}^{\infty} \frac{1}{r^{2}}~ dr = \frac{1}{t},
\end{align}
and
\begin{align} \label{eq:denom}
D & = \int_{0}^{\infty} \frac{r^{n-1}}{(1 + r^{2})^{(n+1)/2}}~ dr \nonumber \\
& = \int_{0}^{\pi/2} \frac{\sin^{n-1} \theta}{\cos^{n-1} \theta}~ \frac{1}{\sec^{n+1} \theta}~ \sec^{2} \theta~ d\theta & \text{using $r = \tan \theta$} \nonumber \\
& = \int_{0}^{\pi/2} \sin^{n-1} \theta~ d\theta \nonumber \\
& = \left[ - \frac{\sin^{n-2} \theta \cos \theta}{n-1} \right]_{0}^{\pi/2} + \frac{n-2}{n-1} \int_{0}^{\pi/2} \sin^{n-3} \theta~ d\theta \nonumber \\
& = \frac{n-2}{n-1} \int_{0}^{\pi/2} \sin^{n-3} \theta~ d\theta \nonumber \\
& = \begin{cases} \dfrac{(n-2)!!}{(n-1)!!}, \quad \text{if $n$ is even}. \nonumber \\ \dfrac{\pi}{4} \cdot \dfrac{(n-2)!!}{(n-1)!!}, \quad \text{if $n$ is odd}. \end{cases} & \text{by continuing the recursion} \nonumber \\
& \geq \frac{\pi}{4} \cdot \frac{(n-2)!}{(n-1)!} \nonumber \\
& \geq \frac{\pi}{4} \cdot \frac{\sqrt{2}}{\pi\sqrt{n}} \nonumber & \text{by Wallis' inequality}\\
& = \frac{1}{2\sqrt{2n}}.
\end{align}
Therefore, from Equations \eqref{eq:num} and \eqref{eq:denom} we get
\[
\prob{\norm{x} \geq t} \leq \frac{N}{D} \leq \frac{2\sqrt{2n}}{t}.
\]
This implies
\[
\prob{\norm{x} \geq \frac{2\sqrt{2n}}{ \eps_1}} \leq \eps_1.
\]
\end{proof}

Proposition \ref{prop:smooth} shows the smoothness property of the Cauchy density. This is the crucial ingredient used in the stronger coupling lemma. Define $K_r$ as before. Then,

\begin{prop} \label{prop:smooth}
For $x \in K_r$, let
\[
C_x = \{y \in \ball{x}{r}: |x\cdot(x-y)|\leq \frac{4r||x||}{\sqrt{n}} \}
\]
and $y\in C_x$. Then,
\[
\frac{f(x)}{f(y)} \geq 1-4r\sqrt{n}
\]
\end{prop}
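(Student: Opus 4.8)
The plan is to estimate the ratio $f(x)/f(y) = (1+\|y\|^2)^{(n+1)/2}/(1+\|x\|^2)^{(n+1)/2}$ from below by controlling how much $\|y\|^2$ can exceed $\|x\|^2$ when $y \in C_x$. First I would write $\|y\|^2 = \|x\|^2 - 2x\cdot(x-y) + \|x-y\|^2$, so that the excess is $\|y\|^2 - \|x\|^2 = -2\,x\cdot(x-y) + \|x-y\|^2$. The two defining constraints of $C_x$ give $|x\cdot(x-y)| \le 4r\|x\|/\sqrt n$ (directly) and $\|x-y\|^2 \le r^2$ (since $y \in \ball{x}{r}$), hence
\[
\|y\|^2 - \|x\|^2 \le \frac{8r\|x\|}{\sqrt n} + r^2.
\]
Dividing by $1+\|x\|^2$ and using $\|x\|/(1+\|x\|^2) \le 1/2$ and $1/(1+\|x\|^2) \le 1$ yields
\[
\frac{1+\|y\|^2}{1+\|x\|^2} \le 1 + \frac{4r}{\sqrt n} + r^2 \le 1 + \frac{5r}{\sqrt n},
\]
say, for $r \le \sqrt n$ (which holds in the regime of interest since $r$ is tiny).

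Next I would raise this to the power $(n+1)/2$ and invert. Using the elementary bound $(1+u)^{-k} \ge 1 - ku$ for $u \ge 0$, with $u = 5r/\sqrt n$ and $k = (n+1)/2$, gives
\[
\frac{f(x)}{f(y)} = \left(\frac{1+\|x\|^2}{1+\|y\|^2}\right)^{(n+1)/2} \ge \left(1 + \frac{5r}{\sqrt n}\right)^{-(n+1)/2} \ge 1 - \frac{(n+1)}{2}\cdot\frac{5r}{\sqrt n} \ge 1 - 4r\sqrt n,
\]
after absorbing the constant (note $(n+1)/(2\sqrt n) \cdot 5 \le 4\sqrt n$ comfortably for $n\ge 2$; one can tighten the constant $4r/\sqrt n + r^2$ above to land exactly on $1-4r\sqrt n$). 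The only subtlety is bookkeeping the constants so the final clean bound $1 - 4r\sqrt n$ comes out; since $r$ will ultimately be chosen as $\eps/(8\sqrt n)$, any $O(r\sqrt n)$ loss is acceptable, so this is not a real obstacle.

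The main thing to be careful about — and the step I would flag as the crux — is the direction of the inequality in the first constraint: $x\cdot(x-y)$ can be negative (i.e. $\|y\|$ can be smaller than $\|x\|$), in which case $f(x)/f(y) \ge 1$ trivially and there is nothing to prove; the content of the lemma is entirely in the case $x\cdot(x-y) > 0$, where $|x\cdot(x-y)| \le 4r\|x\|/\sqrt n$ caps the growth. Also worth noting that the hypothesis $x \in K_r$ is not actually needed for this pointwise estimate — it is only relevant later when one argues that a large fraction of the ball $\ball{x}{r}$ lies in $C_x$ (via the spherical-cap bound of Lemma \ref{lem:sphereHalfSpace}, since $C_x$ is $\ball{x}{r}$ intersected with a slab of half-width $4r\|x\|/\sqrt n$ around the hyperplane through $x$ orthogonal to $x$). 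So the proof itself is just the chain of inequalities above; I do not anticipate any genuine difficulty beyond constant-chasing.
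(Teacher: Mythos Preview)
Your proposal has the direction of the key inequality reversed. You correctly write $f(x)/f(y) = \bigl((1+\|y\|^2)/(1+\|x\|^2)\bigr)^{(n+1)/2}$ at the outset, so to get a \emph{lower} bound on $f(x)/f(y)$ you need a \emph{lower} bound on $\|y\|^2$, not an upper bound. Your chain instead upper-bounds $\|y\|^2-\|x\|^2$, and then in the final display silently replaces the fraction by $\bigl((1+\|x\|^2)/(1+\|y\|^2)\bigr)^{(n+1)/2}$, which is $f(y)/f(x)$, not $f(x)/f(y)$. The closing discussion of the ``trivial case'' is inverted in the same way: $x\cdot(x-y)<0$ forces $\|y\|\ge\|x\|$ (both terms in $\|y\|^2-\|x\|^2=-2x\cdot(x-y)+\|x-y\|^2$ are then nonnegative), not $\|y\|\le\|x\|$; and $\|y\|<\|x\|$ gives $f(x)/f(y)<1$, not $\ge 1$.

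The fix is immediate and yields exactly the paper's argument: from $\|y\|^2=\|x\|^2-2x\cdot(x-y)+\|x-y\|^2$ together with $|x\cdot(x-y)|\le 4r\|x\|/\sqrt n$ and $\|x-y\|^2\ge 0$ you get the lower bound $\|y\|^2\ge\|x\|^2-8r\|x\|/\sqrt n$. Hence $(1+\|y\|^2)/(1+\|x\|^2)\ge 1-4r/\sqrt n$ using $\|x\|/(1+\|x\|^2)\le 1/2$, and then Bernoulli $(1-u)^{(n+1)/2}\ge 1-\tfrac{n+1}{2}u$ with $u=4r/\sqrt n$ gives $f(x)/f(y)\ge 1-2r(n+1)/\sqrt n\ge 1-4r\sqrt n$ for $n\ge 1$. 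No $r^2$ term is needed and there is no constant-chasing beyond this. Your remark that the hypothesis $x\in K_r$ is not used in this pointwise estimate is correct.
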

\begin{proof} [Proof of Proposition \ref{prop:smooth} (Smoothness property)]
We have that,
\begin{align*}
||y||^2 & = ||x||^2 + ||y-x||^2 + 2x\cdot(x-y)\\
&\geq ||x||^2  - \frac{8r||x||}{\sqrt{n}}
\end{align*}
Therefore
\begin{align*}
\frac{f(x)}{f(y)} &= \left(\frac{1+||y||^2}{1+||x||^2}\right)^{\frac{n+1}{2}}\\
&\geq \left(1-\frac{8r||x||}{\sqrt{n}(1+||x||^2)}\right)^{\frac{n+1}{2}} \\
&\geq \left(1-\frac{4r}{\sqrt{n}}\right)^{\frac{n+1}{2}} &\quad \quad \quad \text{$\left(max\ \frac{t}{1+t^2} = \frac{1}{2}\right)$}\\
&\geq 1-4r\sqrt{n}
\end{align*}
\end{proof}

Finally, we have the following coupling lemma.
\begin{lemma}\label{lem:gap}
For $r\leq 1/\sqrt{n}$, if $u, v \in K_{r}$, $\norm{u-v} < r/16\sqrt{n}$, then
\[
d_{tv} (P_{u}, P_{v}) < \frac{1}{2}.
\]
\end{lemma}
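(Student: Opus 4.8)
The plan is to re-run the coupling computation behind Lemma~\ref{lem:coupling_hc}, but to feed it the sharp density control of Proposition~\ref{prop:smooth} in place of the log-Lipschitz bound of Lemma~\ref{lem:parameters_hc}. The latter is hopeless here: the Cauchy density has log-Lipschitz parameter $\alpha=e^{(n+1)/2}$, so $f(y)/f(v)$ can vary by an exponential factor over $\ball{v}{r}$ in the radial direction and there is no useful whole-ball bound. What rescues the argument is that the Cauchy density depends only on $\norm{y}$ and is almost constant in the $n-1$ directions tangent to the sphere through $v$; the slab $C_v$ of Proposition~\ref{prop:smooth} isolates exactly those directions, and by Lemma~\ref{lem:sphereHalfSpace} it captures all but an $e^{-\Omega(1)}$ fraction of $\ball{v}{r}$.

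Concretely, I would first assume without loss of generality that $f(v)\ge f(u)$, so that on $\ball{u}{r}\cap\ball{v}{r}\cap K$ one has $\min\{1,f(y)/f(u),f(y)/f(v)\}=\min\{1,f(y)/f(v)\}$, and recall the standard bound
\[
d_{tv}(P_u,P_v)\ \le\ 1-\frac{1}{\vol{\ball{}{r}}}\int_{\ball{u}{r}\cap\ball{v}{r}\cap K}\min\left\{1,\frac{f(y)}{f(v)}\right\}dy .
\]
Restrict the integral to the good set $W:=C_v\cap\ball{u}{r}\cap\ball{v}{r}\cap K$. For $y\in W$, the estimate $\abs{\sqnorm{y}-\sqnorm{v}}\le r^{2}+8r\norm{v}/\sqrt n$ from the proof of Proposition~\ref{prop:smooth} (used there in one direction but valid in both) yields $f(y)/f(v)\ge 1-O(r\sqrt n)$, and hence $\min\{1,f(y)/f(v)\}\ge 1-O(r\sqrt n)$ throughout $W$.

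It remains to bound $\vol{W}$ from below. Writing $\ball{v}{r}\setminus W\subseteq(\ball{v}{r}\setminus C_v)\cup(\ball{v}{r}\setminus\ball{u}{r})\cup(\ball{v}{r}\setminus K)$, the three pieces are controlled separately: $\ball{v}{r}\setminus C_v$ is a pair of spherical caps each cut at distance $4r/\sqrt n$ from the center, so Lemma~\ref{lem:sphereHalfSpace} with $t=4$ bounds it by $e^{-4}\vol{\ball{}{r}}$; $\ball{v}{r}\setminus\ball{u}{r}$ is at most $\tfrac{1}{16}\vol{\ball{}{r}}$ because $\norm{u-v}<r/16\sqrt n$ (exactly as in the proof of Lemma~\ref{lem:coupling_hc}); and $\ball{v}{r}\setminus K$ is at most $\tfrac{1}{4}\vol{\ball{}{r}}$ because $v\in K_r$. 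Hence $\vol{W}\ge\bigl(1-e^{-4}-\tfrac{1}{16}-\tfrac{1}{4}\bigr)\vol{\ball{}{r}}\ge\tfrac{2}{3}\vol{\ball{}{r}}$, so that
\[
d_{tv}(P_u,P_v)\ \le\ 1-\bigl(1-O(r\sqrt n)\bigr)\cdot\frac{\vol{W}}{\vol{\ball{}{r}}}\ \le\ 1-\bigl(1-O(r\sqrt n)\bigr)\cdot\frac{2}{3},
\]
which is strictly below $\tfrac{1}{2}$ once $r$ is at most a small absolute constant times $1/\sqrt n$; in particular it holds with a wide margin for the choice $r=\eps/8\sqrt n$ used for Theorem~\ref{thm:cauchy_sampling}, where $4r\sqrt n=\eps/2$.

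The main obstacle is entirely the density control, i.e.\ Proposition~\ref{prop:smooth}: for a heavy-tailed density no whole-ball Lipschitz bound is available, and the whole proof hinges on the fact that a radial density changes only by $O(r/\sqrt n)$ in the $n-1$ tangential directions rather than by $O(r)$. Once that is in hand the rest is routine bookkeeping, the only mildly delicate point being that the $\tfrac{1}{4}$ loss coming from $v\in K_r$ is the largest of the three discarded regions, so one has to be a little careful to keep $\vol{W}$ above half of $\vol{\ball{}{r}}$.
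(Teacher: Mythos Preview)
Your argument is correct and matches the paper's proof essentially line for line: restrict to the slab $C_v$, apply Proposition~\ref{prop:smooth} for the density ratio, and bound the three excised regions $\ball{v}{r}\setminus C_v$, $\ball{v}{r}\setminus\ball{u}{r}$, $\ball{v}{r}\setminus K$ via Lemma~\ref{lem:sphereHalfSpace} and the definition of $K_r$. You are in fact slightly more careful with the constants than the paper (which writes $\vol{\ball{v}{r}\setminus K}\le\tfrac{1}{16}\vol{\ball{}{r}}$ where the definition of $K_r$ only yields $\tfrac{1}{4}$), and you rightly flag that the final inequality needs $r$ a small absolute constant times $1/\sqrt n$ rather than merely $r\le 1/\sqrt n$.
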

\begin{proof} [Proof of Lemma \ref{lem:gap} (Coupling lemma for Cauchy)]
We may assume that $f(v)\geq f(u)$. Then,
\[
d_{tv} (P_{u}, P_{v})  \leq 1 - \frac{1}{\vol{\ball{}{r}}} \int_{\ball{u}{r} \cap \ball{v}{r}\cap K} \min\left\{1, \frac{f(y)}{f(v)}\right\} dy
\]
Let us lower bound lower bound the second term in the right hand side.
\begin{align*}
\int_{\ball{u}{r} \cap \ball{v}{r}\cap K} \min\left\{1, \frac{f(y)}{f(v)}\right\} dy
& \geq \int_{\ball{v}{r} \cap \ball{u}{r}\cap K \cap  C_v} \min\left\{1, \frac{f(y)}{f(v)}\right\} dy \\
& \geq (1-4r\sqrt{n}) \vol{\ball{v}{r} \cap \ball{u}{r} \cap C_v \cap  K} \quad \quad \quad \quad \quad \quad \text{(by Lemma \ref{prop:smooth})}\\
& \geq (1-4r\sqrt{n}) \left(\vol{\ball{v}{r}} - \vol{\ball{v}{r}\setminus\ball{u}{r}}-\vol{\ball{v}{r}\setminus C_v}-\vol{\ball{v}{r}\setminus K}\right)\\
& \geq (1-4r\sqrt{n}) \left(\vol{\ball{v}{r}} - \frac{1}{16}\vol{\ball{}{r}}-\frac{1}{16}\vol{\ball{}{r}}-e^{-4}\vol{\ball{}{r}}\right)\\
& \geq (1-4r\sqrt{n}) \left( \frac{13}{16}\right)\vol{\ball{}{r}}
\end{align*}
where the bounds on $\vol{\ball{v}{r}\setminus\ball{u}{r}}$ and $\vol{\ball{v}{r}\setminus C_v}$ are derived from Lemma \ref{lem:sphereHalfSpace} and $\vol{\ball{v}{r}\setminus K}$ is bounded using the fact that $v\in K_r$. Since $r\leq 1/\sqrt{n}$,
\[
d_{tv} (P_{u}, P_{v})\leq \frac{3+4r\sqrt{n}}{16} \leq \frac{1}{2}
\]
\end{proof}

The proof of conductance and mixing bound follow the proof of mixing bound for $s$-harmonic-concave functions closely. Comparing the above coupling lemma with that of $s$-harmonic-concave functions (Lemma \ref{lem:coupling_hc}), we observe that the improvement is obtained due to the constraint on the radius of the ball walk in the coupling lemma. In the case of Cauchy, a slightly relaxed radius suffices for points close to each other to have a considerable overlap in their one-step distribution.
\subsection{Discussion}
There are two aspects of our algorithm and analysis that merit improvement. The first is the dependence on the diameter, which could perhaps be made logarithmic by applying an appropriate affine transformation as in the case of logconcave densities. The second is eliminating the dependence on the smoothness parameter entirely, by allowing for sharp changes locally and considering a smoother version of the original function. Both these aspects seem to be tied closely to proving a tail bound on a $1$-dimensional marginal of an $s$-harmonic concave function.
\bibliographystyle{amsalpha}
\bibliography{ball-walk-ref}

\providecommand{\bysame}{\leavevmode\hbox to3em{\hrulefill}\thinspace}
\providecommand{\MR}{\relax\ifhmode\unskip\space\fi MR }
\providecommand{\MRhref}[2]{%
  \href{http://www.ams.org/mathscinet-getitem?mr=#1}{#2}
}
\providecommand{\href}[2]{#2}
\begin{thebibliography}{DIIM04}

\bibitem[AK91]{AK}
D.~Applegate and R.~Kannan, \emph{Sampling and integration of near log-concave
  functions}, STOC '91: Proceedings of the twenty-third annual ACM symposium on
  Theory of computing (New York, NY, USA), ACM, 1991, pp.~156--163.

\bibitem[Bob07]{bobkov}
S.G. Bobkov, \emph{Large deviations and isoperimetry over convex probability
  measures with heavy tails}, Electronic Journal of Probability \textbf{12}
  (2007), 1072--1100.

\bibitem[Bor74]{B74}
C.~Borell, \emph{Convex measures on locally convex spaces}, Ark. Math.
  \textbf{12} (1974), 239--252.

\bibitem[Bor75]{B75}
\bysame, \emph{Convex set functions in d-space}, Periodica Mathematica
  Hungarica \textbf{6} (1975), 111--136.

\bibitem[BV04]{BV}
D.~Bertsimas and S.~Vempala, \emph{Solving convex programs by random walks}, J.
  ACM \textbf{51} (2004), no.~4, 540--556.

\bibitem[DF91]{DF90}
M.E Dyer and A.M. Frieze, \emph{Computing the volume of a convex body: a case
  where randomness provably helps}, Proc. of AMS Symposium on Probabilistic
  Combinatorics and Its Applications, 1991, pp.~123--170.

\bibitem[DFK91]{DFK}
M.E. Dyer, A.M. Frieze, and R.~Kannan, \emph{A random polynomial-time algorithm
  for approximating the volume of convex bodies}, J. ACM \textbf{38} (1991),
  no.~1, 1--17.

\bibitem[DIIM04]{diim}
M.~Datar, N.~Immorlica, P.~Indyk, and V.~Mirrokni, \emph{Locality-sensitive
  hashing scheme based on $p$-stable distributions}, In Proc. of the 20th
  Annual Symposium on Computational Geometry (SOCG) (2004), 253--262.

\bibitem[Ind06]{indyk06}
P.~Indyk, \emph{Stable distributions, pseudorandom generators, embeddings, and
  data stream computation}, J. ACM \textbf{53} (2006), no.~3, 307--323.

\bibitem[Joh87]{johnson87Book}
M.E. Johnson, \emph{Multivariate statistical simulation}, Wiley, 1987.

\bibitem[KL96]{kannanLi}
R.~Kannan and G.~Li, \emph{Sampling according to the multivariate normal
  density}, FOCS '96: Proceedings of the 37th Annual Symposium on Foundations
  of Computer Science (Washington, DC, USA), IEEE Computer Society, 1996,
  p.~204.

\bibitem[KLS95]{KLS95}
R.~Kannan, L.~Lov\'{a}sz, and M.~Simonovits, \emph{Isoperimetric problems for
  convex bodies and a localization lemma}, J. Discr. Comput. Geom. \textbf{13}
  (1995), 541--559.

\bibitem[KLS97]{KLS97}
R.~Kannan, L.~Lov\'{a}sz, and M.~Simonovits, \emph{Random walks and an
  {$O^*(n^5)$} volume algorithm for convex bodies}, Random Structures and
  Algorithms \textbf{11} (1997), 1--50.

\bibitem[KV06]{KV}
A.~Kalai and S.~Vempala, \emph{Simulated annealing for convex optimization},
  Math. Oper. Res. \textbf{31} (2006), no.~2, 253--266.

\bibitem[LHC07]{li07}
P.~Li, T.J. Hastie, and K.W. Church, \emph{Nonlinear estimators and tail bounds
  for dimension reduction in l1 using cauchy random projections}, J. Mach.
  Learn. Res. \textbf{8} (2007), 2497--2532.

\bibitem[Lov90]{L90}
L.~Lov\'{a}sz, \emph{How to compute the volume?}, Jber. d. Dt. Math.-Verein,
  Jubil\"aumstagung 1990 (1990), 138--151.

\bibitem[LS92]{LS92}
L.~Lov\'{a}sz and M.~Simonovits, \emph{On the randomized complexity of volume
  and diameter}, Proc.\ 33rd IEEE Annual Symp.\ on Found.\ of Comp.\ Sci.,
  1992, pp.~482--491.

\bibitem[LS93]{LS93}
\bysame, \emph{Random walks in a convex body and an improved volume algorithm},
  Random Structures and Alg., vol.~4, 1993, pp.~359--412.

\bibitem[LV06a]{LV06}
L.~Lov\'{a}sz and S.~Vempala, \emph{Fast algorithms for logconcave functions:
  Sampling, rounding, integration and optimization}, FOCS '06: Proceedings of
  the 47th Annual IEEE Symposium on Foundations of Computer Science
  (Washington, DC, USA), IEEE Computer Society, 2006, pp.~57--68.

\bibitem[LV06b]{LV3}
\bysame, \emph{Hit-and-run from a corner}, SIAM J. Computing \textbf{35}
  (2006), 985--1005.

\bibitem[LV06c]{LV2}
\bysame, \emph{Simulated annealing in convex bodies and an {$O^*(n^4)$} volume
  algorithm}, J. Comput. Syst. Sci. \textbf{72} (2006), no.~2, 392--417.

\bibitem[LV07]{LV1}
\bysame, \emph{The geometry of logconcave functions and sampling algorithms},
  Random Struct. Algorithms \textbf{30} (2007), no.~3, 307--358.

\bibitem[Man06]{mandel}
B.~Mandelbrot, \emph{The (mis)behavior of markets}, Basic Books, 2006.

\bibitem[Nol09]{nolan}
J.P. Nolan, \emph{Stable distributions - models for heavy tailed data},
  Birkh\"auser, Boston, 2009, In progress, Chapter 1 online at
  academic2.american.edu/$\sim$jpnolan.

\end{thebibliography}
\section{Appendix: proofs}
\subsection{Sampling the Cauchy density}
The following lemma gives the parameters for the Cauchy distribution function.
\begin{prop} \label{prop:smooth_const}
If $\norm{y-x} \leq 1/n$, then
\[
\frac{f(y)}{f(x)} \geq \frac{1}{e},
\]
\end{prop}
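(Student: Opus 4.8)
The plan is to bound $\bigl|\log f(y) - \log f(x)\bigr|$ directly, via a Lipschitz estimate on $\log f$ whose constant is exactly calibrated against the step length $\norm{y-x}\le 1/n$. Since the normalizing constant cancels in the ratio, I would work with $f(x) = (1+\norm{x}^2)^{-(n+1)/2}$, so that $\log f(x) = -\tfrac{n+1}{2}\log(1+\norm{x}^2)$ and $\nabla \log f(x) = -\dfrac{(n+1)\,x}{1+\norm{x}^2}$, whence $\norm{\nabla \log f(x)} = \dfrac{(n+1)\norm{x}}{1+\norm{x}^2}$.

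The key elementary inequality is the AM--GM bound $1+\norm{x}^2 \ge 2\norm{x}$, which gives $\norm{\nabla \log f(x)} \le \tfrac{n+1}{2}$ for every $x\in\reals^n$; that is, $\log f$ is $\tfrac{n+1}{2}$-Lipschitz on all of $\reals^n$. Integrating this gradient bound along the segment from $x$ to $y$ yields
\[
\bigl|\log f(y) - \log f(x)\bigr| \;\le\; \frac{n+1}{2}\,\norm{y-x} \;\le\; \frac{n+1}{2}\cdot\frac1n \;=\; \frac{n+1}{2n}.
\]
Since $n\ge 1$ we have $\tfrac{n+1}{2n}\le 1$, so $\log f(y)-\log f(x)\ge -1$, i.e.\ $f(y)/f(x)\ge e^{-1}$; by the symmetry of the hypothesis in $x$ and $y$ this is in fact the statement that the Cauchy density has parameters $(e,1/n)$.

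There is essentially no obstacle here beyond getting the Lipschitz constant sharp: using AM--GM to pin $\norm{\nabla\log f}$ at exactly $\tfrac{n+1}{2}$ (rather than a larger bound) is precisely what makes the product with the step $1/n$ land at $\le 1$, and hence produces the ratio bound $1/e$ rather than something weaker. If one prefers to avoid calculus, the same estimate follows by composing two $1$-Lipschitz maps: $t\mapsto \log(1+t^2)$ is $1$-Lipschitz on $\reals$ since $|2t/(1+t^2)|\le 1$, and $x\mapsto\norm{x}$ is $1$-Lipschitz, so $\bigl|\log(1+\norm{y}^2)-\log(1+\norm{x}^2)\bigr|\le \bigl|\,\norm{y}-\norm{x}\,\bigr|\le \norm{y-x}\le 1/n$, and multiplying by $\tfrac{n+1}{2}$ finishes the argument as before.
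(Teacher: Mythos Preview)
Your proof is correct and follows essentially the same approach as the paper: bound $\norm{\nabla\log f}$ via the inequality $r/(1+r^2)\le 1/2$ (equivalently AM--GM), obtaining a global Lipschitz constant $(n+1)/2$ for $\log f$, and then multiply by the step size $1/n$ to get $|\log f(y)-\log f(x)|\le (n+1)/(2n)\le 1$. The only cosmetic difference is that the paper computes $\nabla f$ and divides by $f$, whereas you compute $\nabla\log f$ directly; your presentation is, if anything, a bit cleaner.
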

\begin{proof}
Let $\nabla_{u}$ denote the directional derivative of a function along $u$. Then
\begin{align}
\log \frac{f(x)}{f(y)} & = \log f(x) - \log f(y) \nonumber \\
& \leq \sup_{z \in \reals^{n}} \sup_{\norm{u}=1} \norm{\nabla_{u} \log f(z)} \cdot \norm{x-y} \nonumber \\
& = \sup_{z \in \reals^{n}} \sup_{\norm{u}=1} \frac{1}{f(z)} \norm{\nabla_{u} f(z)} \cdot \norm{x-y} \nonumber \\
& = \norm{x-y} \sup_{z \in \reals^{n}} \frac{1}{f(z)} \sup_{\norm{u}=1} \norm{\nabla_{u} f(z)} \label{eq:log-ratio}.
\end{align}
We know that by definition
\[
f(z) = \frac{c}{\left(1+\sqnorm{z}\right)^{(n+1)/2}},
\]
for some constant $c$ that includes $\det(A)^{-1}$ and the normalizing factor for Cauchy distribution. Thus
\begin{align*}
\sup_{\norm{u}=1} \norm{\nabla_{u} f(z)} & = \sup_{\norm{u}=1} \norm{\sum_{i=1}^{n} \left(\frac{\partial}{\partial z_{i}} f(z)\right) u_{i}} \\
& = \sup_{\norm{u}=1} \norm{\sum_{i=1}^{n} c \cdot \frac{-(n+1)}{2} \cdot \left(1 + \sqnorm{z}\right)^{-(n+3)/2} \cdot 2 z_{i} u_{i}} \\
& = \frac{c(n+1) \norm{z}}{\left(1+\sqnorm{z}\right)^{(n+3)/2}}, & \text{using $u = \frac{-z}{\norm{z}}$}.
\end{align*}
Plugging this in Equation \eqref{eq:log-ratio} we get
\begin{align*}
\log \frac{f(x)}{f(y)} & \leq \norm{x-y} \sup_{z \in \reals^{n}} \frac{(n+1) \norm{z}}{1+\sqnorm{z}} \\
& = \norm{x-y} (n+1) \sup_{z \in \reals^{n}} \frac{\norm{z}}{1+\sqnorm{z}} \\
& = \norm{x-y} (n+1) \sup_{r \in \reals} \frac{r}{1+r^{2}} \\
& = \frac{\norm{x-y} (n+1)}{2} \\
& \leq 1 & \text{using $\norm{x-y} \leq 1/n$}.
\end{align*}
Therefore
\[
\frac{f(y)}{f(x)} \geq \frac{1}{e}.
\]
\end{proof}
\end{document}